\newcommand{\noun}[1]{\textsc{#1}}
\newcommand{\lyxaddress}[1]{
\par {\raggedright #1
\vspace{1.4em}
\noindent\par}
}
\theoremstyle{plain}
\newtheorem{thm}{\protect\theoremname}
  \theoremstyle{definition}
  \newtheorem{defn}[thm]{\protect\definitionname}
  \theoremstyle{plain}
  \newtheorem{lem}[thm]{\protect\lemmaname}
  \theoremstyle{plain}
  \newtheorem{prop}[thm]{\protect\propositionname}
  \theoremstyle{remark}
  \newtheorem{rem}[thm]{\protect\remarkname}
\setlist{noitemsep}
\setlist[1]{leftmargin=1.7em}
\setlist[2]{leftmargin=1.7em}
\titleformat{\section}{\large\bfseries}{\thesection}{1em}{}
\titleformat{\subsection}{\normalsize\bfseries}{\thesubsection}{1em}{}
\newcommand{\cloneFont}[1]{\mathsf{#1}}
\newcommand{\CloneBF}{\protect\ensuremath{\cloneFont{BF}}}
\newcommand{\CloneM}{\protect\ensuremath{\cloneFont{M}}}
\newcommand{\CloneL}{\protect\ensuremath{\cloneFont{L}}}
\newcommand{\CloneR}{\protect\ensuremath{\cloneFont{R}}}
\newcommand{\CloneD}{\protect\ensuremath{\cloneFont{D}}}
\newcommand{\CloneN}{\protect\ensuremath{\cloneFont{N}}}
\newcommand{\CloneS}{\protect\ensuremath{\cloneFont{S}}}
\newcommand{\CloneV}{\protect\ensuremath{\cloneFont{V}}}
\newcommand{\CloneE}{\protect\ensuremath{\cloneFont{E}}}
\newcommand{\CloneI}{\protect\ensuremath{\cloneFont{I}}}
\definecolor{g1}{gray}{0.7}
\definecolor{g3}{gray}{0.87}
\colorlet{WSAT}{g1}
\colorlet{nWSAT}{g3}
\tikzset{
WSAT/.style={fill=WSAT},
nWSAT/.style={fill=nWSAT},
}
  \providecommand{\definitionname}{Definition}
  \providecommand{\lemmaname}{Lemma}
  \providecommand{\propositionname}{Proposition}
  \providecommand{\remarkname}{Remark}
\providecommand{\theoremname}{Theorem}
\begin{document}

\title{\vspace{-6.5ex}The Connectivity of Boolean Satisfiability: Dichotomies
for Formulas and Circuits}

\author{\noun{\normalsize{}Konrad W. Schwerdtfeger}}

\maketitle

\lyxaddress{\begin{center}
{\small{}\vspace{-5ex}Institut für Theoretische Informatik, Leibniz
Universität Hannover,}\\
{\small{}Appelstr. 4, 30167 Hannover, Germany}\\
{\small{}\path|k.w.s@gmx.net|}
\par\end{center}}
\begin{abstract}
For Boolean satisfiability problems, the structure of the solution
space is characterized by the solution graph, where the vertices are
the solutions, and two solutions are connected iff they differ in
exactly one variable. In 2006, Gopalan et al.\ studied connectivity
properties of the solution graph and related complexity issues for
CSPs \citep{gop}, motivated mainly by research on satisfiability
algorithms and the satisfiability threshold. They proved dichotomies
for the diameter of connected components and for the complexity of
the $st$-connectivity question, and conjectured a trichotomy for
the connectivity question. Recently, we were able to establish the
trichotomy \citep{csp}.

Here, we consider connectivity issues of satisfiability problems defined
by Boolean circuits and propositional formulas that use gates, resp.
connectives, from a fixed set of Boolean functions. We obtain dichotomies
for the diameter and the two connectivity problems: On one side, the
diameter is linear in the number of variables, and both problems are
in P, while on the other side, the diameter can be exponential, and
the problems are PSPACE-complete. For partially quantified formulas,
we show an analogous dichotomy.\\

\noindent \textbf{Keywords}$\quad$Computational Complexity $\cdot$
Boolean Satisfiability$\cdot$ Boolean Circuits $\cdot$ Post's Lattice
$\cdot$ PSPACE-Completeness $\cdot$ Dichotomy Theorems $\cdot$
Graph Connectivity
\end{abstract}

\section{Introduction}

The Boolean satisfiability problem (SAT), as well as many related
questions like equivalence, counting, enumeration, and numerous versions
of optimization, are of great importance in both theory and applications
of computer science. In this article, we focus on the solution-space
structure: We consider the \emph{solution graph}, where the vertices
are the solutions, and two solutions are connected iff they differ
in exactly one variable. For this implicitly defined graph, we then
study the connectivity and $st$-connectivity problems, and the diameter
of connected components. The figures below give an impression of how
solution graphs may look like.

\begin{figure}[!h]
\begin{centering}
\includegraphics[scale=0.64]{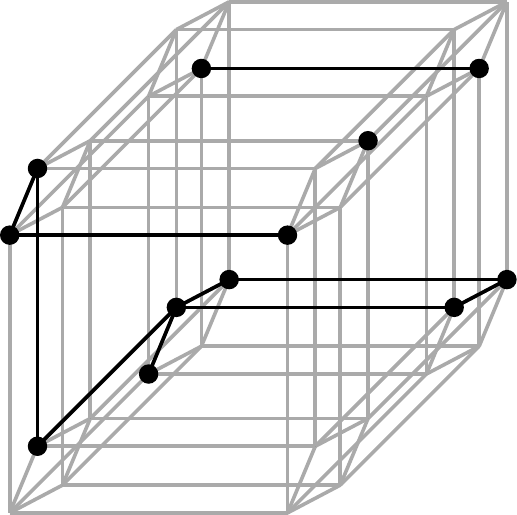}$\qquad$\includegraphics[scale=0.34]{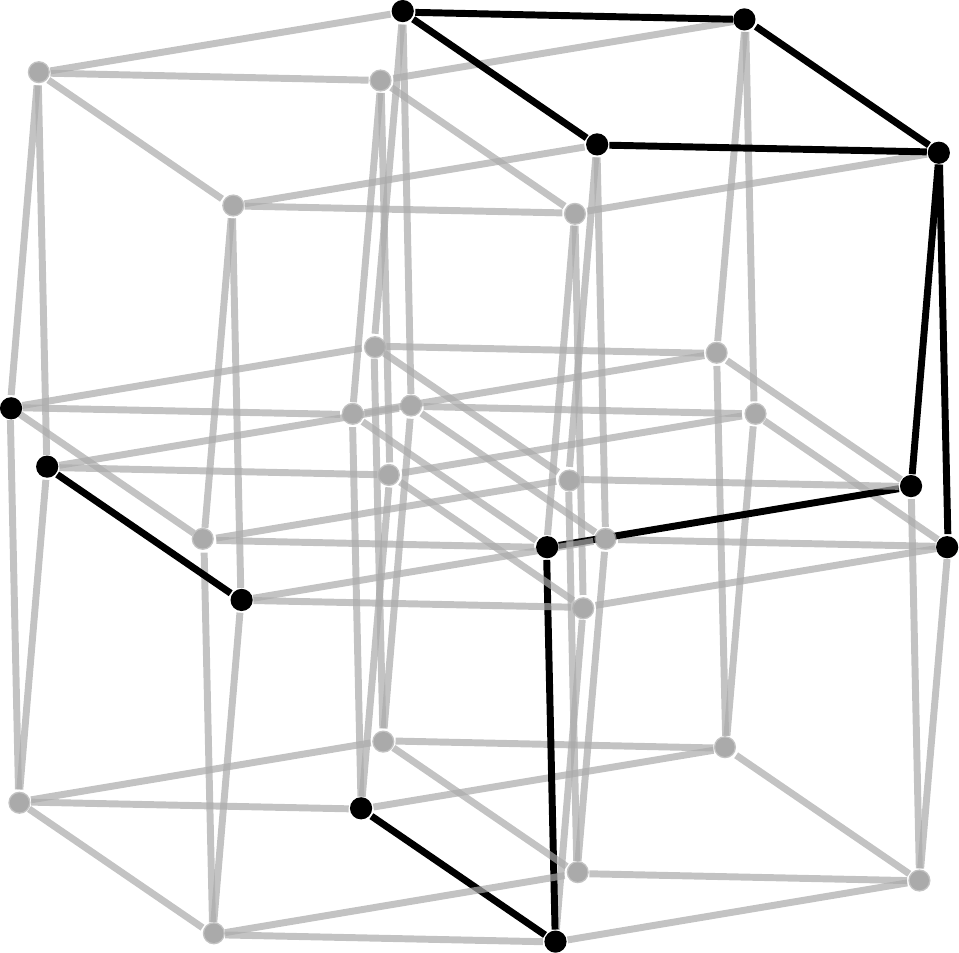}$\qquad$\includegraphics[scale=0.36]{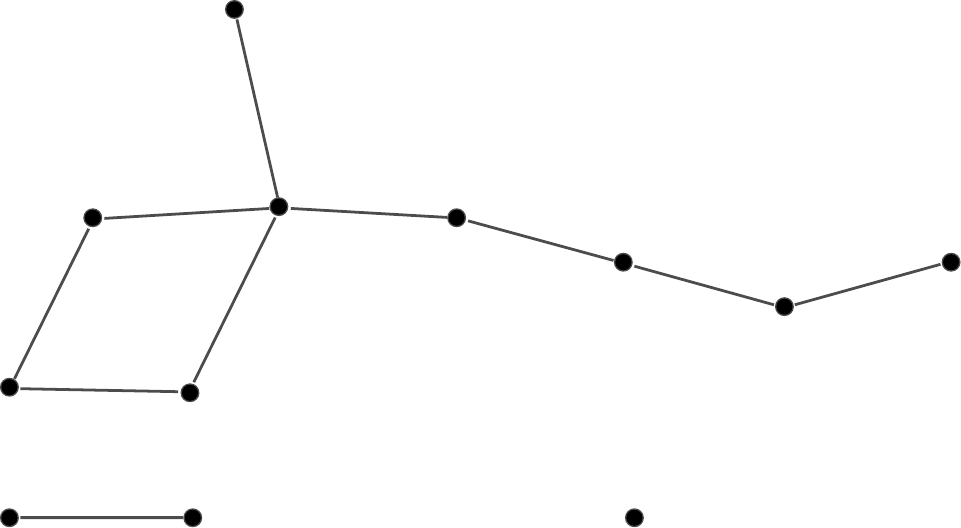}
\par\end{centering}

\protect\caption{Depictions of the subgraph of the 5-dimensional hypercube graph induced
by a typical random Boolean relation with 12 elements. Left: highlighted
on a orthographic hypercube projection. Center: highlighted on a ``Spectral
Embedding'' of the hypercube graph by \noun{Mathematica}. Right:
the sole subgraph, arranged by \noun{Mathematica}}
\end{figure}

\begin{figure}[!h]
\begin{centering}
\includegraphics[scale=0.3]{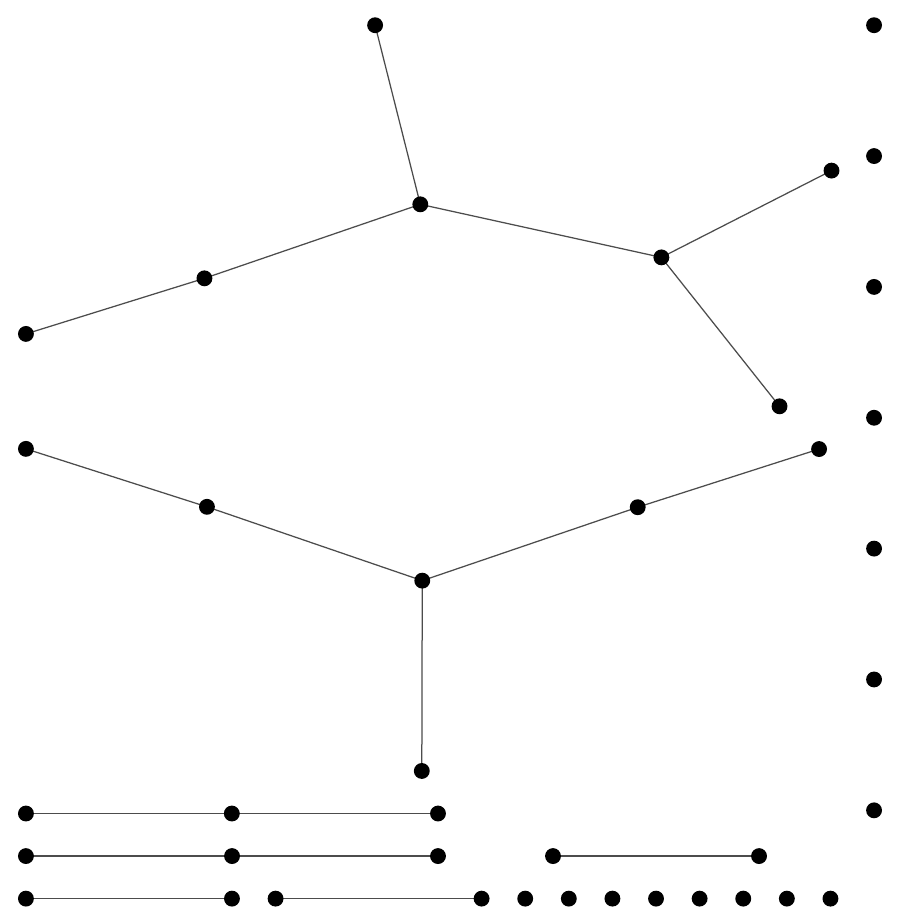}$\qquad$\includegraphics[scale=0.36]{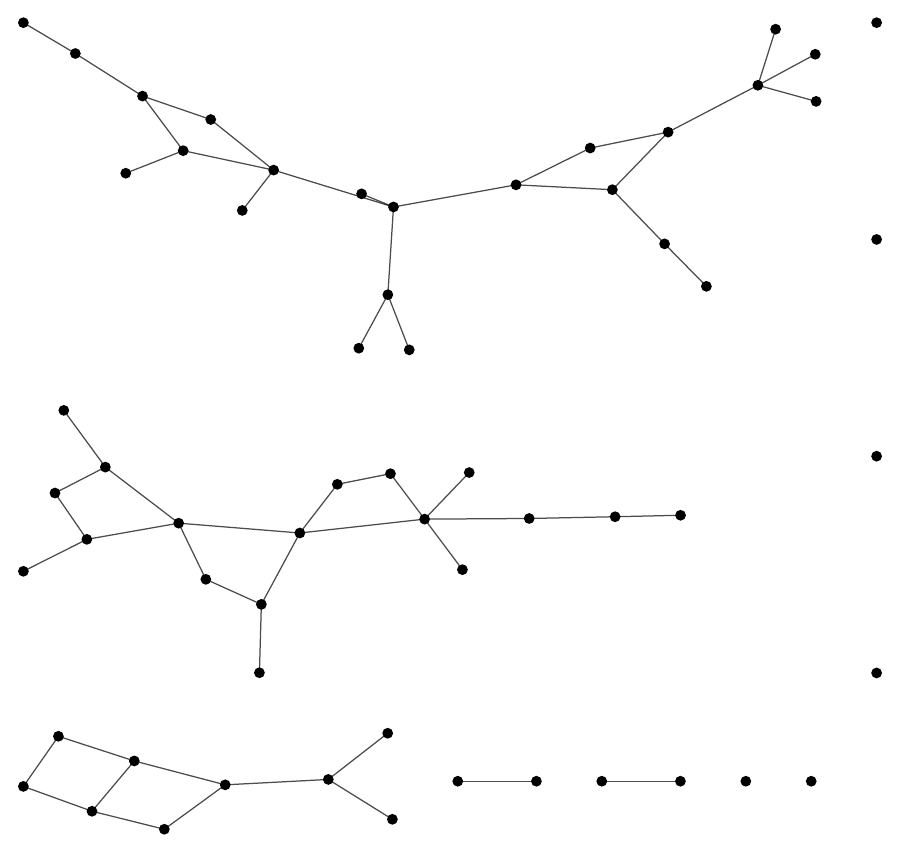}$\qquad$\includegraphics[scale=0.46]{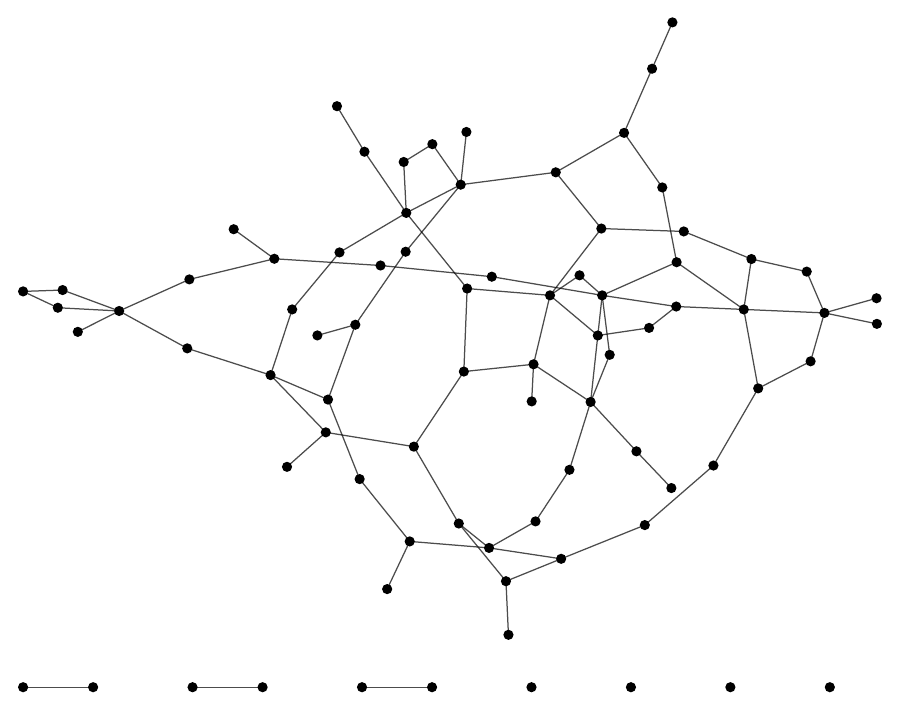}
\par\end{centering}

\protect\caption{Subgraphs of the 8-dimensional hypercube graph induced by typical
random relations with 40, 60 and 80 elements, arranged by \noun{Mathematica.}}
\end{figure}

While the standard satisfiability problem is defined for propositional
formulas, which can be seen as one special form of descriptions for
Boolean relations, satisfiability and related problems have also been
considered for many alternative descriptions, e.g. Boolean constraint
satisfactions problems (\emph{CSP}s), Boolean circuits, binary decision
diagrams, and Boolean neural networks. For the usual formulas with
the connectives $\wedge$, $\vee$ and $\neg$, there are several
common variants. A special form are formulas in conjunctive normal
form (CNF-formulas). A generalization of CNF-formulas are CNF($\mathcal{S}$)-formulas,
which are conjunctions of constraints on the variables taken from
a finite template set $\mathcal{S}$.

Here we consider another type of generalization: Arbitrarily nested
formulas built with connectives from some finite set of Boolean functions
$B$ (where the arity may be greater than two), known as \emph{$B$-formulas}.
Also we study \emph{$B$-circuits}, where analogously the allowed
gates implement the functions from $B$. As a further extension we
consider partially quantified \emph{$B$-}formulas.

A direct application of $st$-connectivity in solution graphs are
\emph{reconfiguration problems}, that arise when we wish to find a
step-by-step transformation between two feasible solutions of a problem,
such that all intermediate results are also feasible. Recently, the
reconfiguration versions of many problems such as \noun{Independent-Set},
\noun{Vertex-Cover}, \noun{Set-Cover} \noun{Graph-$k$-Coloring},
\noun{Shortest-Path} have been studied, and complexity results obtained
(see e.g. \citep{recon,short}). Also of relevance are the connectivity
properties to the problem of \emph{structure identification}, where
one is given a relation explicitly and seeks a short representation
of some kind (see e.g. \citep{creignou2008structure}); this problem
is important especially for learning in artificial intelligence.

A better understanding of the solution space structure also promises
advancement of SAT algorithms: It has been discovered that the solution
space connectivity is strongly correlated to the performance of standard
satisfiability algorithms like WalkSAT and DPLL on random instances:
As one approaches the\emph{ satisfiability threshold} (the ratio of
constraints to variables at which random $k$-CNF-formulas become
unsatisfiable for $k\geq3$) from below, the solution space (with
the connectivity defined as above) fractures, and the performance
of the algorithms deteriorates \citep{mezard2005clustering,maneva2007new}.
These insights mainly came from statistical physics, and lead to the
development of the \emph{survey propagation algorithm}, which has
much better performance on random instances \citep{maneva2007new}.

While current SAT solvers normally accept only CNF-formulas as input,
one of the most important applications of satisfiability testing is
verification and optimization in Electronic Design Automation (EDA),
where the instances derive mostly from digital circuit descriptions
\citep{csat2}. Though many such instances can easily be encoded in
CNF, the original structural information, such as signal ordering,
gate orientation and logic paths, is lost, or at least obscured. Since
exactly this information can be very helpful for solving these instances,
considerable effort has been made recently to develop satisfiability
solvers that work with the circuit description directly \citep{csat2},
which have far superior performance in EDA applications, or to restore
the circuit structure from CNF \citep{extracting}. This is a major
motivation for our study.

Our perspective is mainly from complexity theory: We classify $B$-formulas
and $B$-circuits by the worst-case complexity of the connectivity
problems, analogously to Schaefer's dichotomy theorem for satisfiability
of CSPs from 1978 \citep{Schaefer:1978:CSP:800133.804350}, Lewis'
dichotomy for satisfiability of $B$-formulas from 1979 \citep{lewis1979},
and Gopalan et al.'s classification for the connectivity problems
of CSPs from 2006 \citep{gop}. Along the way, we will examine structural
properties of the solution graph like its maximal diameter, and devise
efficient algorithms for solving the connectivity problems.

We begin with a formal definition of some central concepts.
\begin{defn}
An $n$-ary\emph{ Boolean relation} is a subset of $\{0,1\}^{n}$
($n\ge1$). If $\phi$ is some description of an $n$-ary Boolean
relation $R$, e.g. a propositional formula (where the variables are
taken in lexicographic order), the \emph{solution graph} $G(\phi)$
of $\phi$ is the subgraph of the $n$-dimensional hypercube graph
induced by the vectors in $R$, i.e., the vertices of $G(\phi$) are
the vectors in $R$, and there is an edge between two vectors precisely
if they differ in exactly one position. 

We use $\boldsymbol{a},\boldsymbol{b},\ldots$ to denote vectors of
Boolean values and $\boldsymbol{x},\boldsymbol{y},\ldots$ to denote
vectors of variables, $\boldsymbol{a}=(a_{1},a_{2},\ldots)$ and $\boldsymbol{x}=(x_{1},x_{2},\ldots)$.

The \emph{Hamming weight }$|\boldsymbol{a}|$ of a Boolean vector
\textbf{$\boldsymbol{a}$} is the number of 1's in \textbf{$\boldsymbol{a}$}.
For two vectors \textbf{$\boldsymbol{a}$} and $\boldsymbol{b}$,
the \emph{Hamming distance }$|\boldsymbol{a}-\boldsymbol{b}|$ is
is the number of positions in which they differ\emph{.}

If \textbf{$\boldsymbol{a}$} and $\boldsymbol{b}$ are solutions
of $\phi$ and lie in the same connected component of $G(\phi)$,
we write $d_{\phi}(\boldsymbol{a},\boldsymbol{b})$ to denote the
shortest-path distance between \textbf{$\boldsymbol{a}$} and $\boldsymbol{b}$.

The \emph{diameter} \emph{of a connected component} is the maximal
shortest-path distance between any two vectors in that component.
The \emph{diameter of }$G(\phi)$\emph{ }is the maximal diameter of
any of its connected components.
\end{defn}

\section{Connectivity of CNF-Formulas}

Research has focused on the structure of the solution space only quite
recently: One of the earliest studies on solution-space connectivity
was done for CNF($\mathcal{S}$)-formulas with constants (see the
definition below), begun in 2006 by Gopalan et al. (\citep{Gopalan:2006},
\citep{Makino:2007:BCP:1768142.1768162}, \citep{gop}, \citep{csp}). 

In our proofs for $B$-formulas and $B$-circuits, we will use Gopalan
et al.'s results for 3-CNF-formulas, so we have to introduce some
related terminology.
\begin{defn}
\label{def:cnf}A \emph{CNF-formula} is a Boolean formula of the form
$C_{1}\wedge\cdots\wedge C_{m}$ ($1\leq m<\infty$), where each $C_{i}$
is a \emph{clause}, that is, a finite disjunction of \emph{literals}
(variables or negated variables). A\emph{ $k$-CNF-formula} ($k\geq1$)
is a CNF-formula where each $C_{i}$ has at most $k$ literals.

For a finite set of Boolean relations $\mathcal{S}$, a\emph{ CNF($\mathcal{S}$)-formula}
(with constants) over a set of variables $V$ is a finite conjunction
$C_{1}\wedge\cdots\wedge C_{m}$, where each $C_{i}$ is a \emph{constraint
application} (\emph{constraint }for short), i.e., an expression of
the form $R(\xi_{1},\ldots,\xi_{k})$, with a $k$-ary relation $R\in\mathcal{S}$,
and each $\xi_{j}$ is a variable in $V$ or one of the constants
0, 1.

A \emph{$k$-clause} is a disjunction of $k$ variables or negated
variables. For $0\leq i\leq k$, let $D_{i}$ be the set of all satisfying
truth assignments of the $k$-clause whose first $i$ literals are
negated, and let $S_{k}=\{D_{0},\ldots,D_{k}\}$. Thus, CNF($S_{k}$)
is the collection of $k$-CNF-formulas.
\end{defn}
Gopalan et al.\ studied the following two decision problems for CNF($\mathcal{S}$)-formulas:
\begin{itemize}
\item the \emph{connectivity problem }\emph{\noun{Conn}}\emph{($\mathcal{S}$)}:
given a CNF($\mathcal{S}$)-formula $\phi$, is $G(\phi)$ connected?
(if $\phi$ is unsatisfiable, then $G(\phi)$ is considered connected)
\item the \emph{$st$-connectivity problem }\emph{\noun{st-Conn}}\emph{($\mathcal{S}$)}:
given a CNF($\mathcal{S}$)-formula $\phi$ and two solutions $\boldsymbol{s}$
and $\boldsymbol{t}$, is there a path from $\boldsymbol{s}$ to $\boldsymbol{t}$
in $G(\phi)$?\end{itemize}
\begin{lem}
\label{lem:-gopa} \emph{\citep[Lemm 3.6]{gop} }\noun{st-Conn($S_{3}$)}
and \noun{Conn($S_{3}$)} are $\mathrm{PSPACE}$-complete.
\end{lem}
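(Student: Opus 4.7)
The plan is to establish membership in PSPACE first, then PSPACE-hardness via a simulation of a polynomial-space computation.

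For membership, both problems admit straightforward NPSPACE algorithms: for \noun{st-Conn}($S_3$), one nondeterministically walks from $\boldsymbol{s}$ to $\boldsymbol{t}$ in $G(\phi)$ one vertex at a time, storing only the current vertex and a step counter bounded by $2^n$; for \noun{Conn}($S_3$), one iterates over all pairs of candidate assignments, checks satisfaction in polynomial time, and invokes the previous routine. Savitch's theorem then collapses NPSPACE to PSPACE.

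For hardness, I would reduce from the configuration-reachability problem of a fixed polynomial-space Turing machine (canonically PSPACE-complete). Given such an $M$, an input $w$, and two configurations $c_1, c_2$ described in $p(|w|)$ bits each, the idea is to build a 3-CNF-formula $\phi$ whose variables encode a single tentative configuration together with sufficiently many auxiliary ``synchronization'' variables, such that: (i) the solutions of $\phi$ correspond to valid configurations of $M$ paired with auxiliary state vectors; (ii) two solutions adjacent in $G(\phi)$ differ either by a scheduled intermediate step of a simulated transition or by a ``no-op'' auxiliary move; (iii) every maximal sequence of aux moves realizes exactly one valid transition of $M$ (equivalently, its reverse, since $G(\phi)$ is undirected); and (iv) the prescribed $\boldsymbol{s},\boldsymbol{t}$ encode $c_1, c_2$ in a canonical aux-setting, so st-connectivity in $G(\phi)$ is equivalent to $c_2$ being reachable from $c_1$ in $M$.

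The hard part will be the bit-flip gadgetry. A single TM step updates the head position, the state, and one tape symbol simultaneously, whereas a solution-graph edge flips a single variable. To overcome this I would introduce ``phase'' variables whose increments sequence the rewriting: clauses force the tape bit, head bit, and state bits to change only in a fixed order keyed to the phase, and only when the phase indicates a rewrite is legal. All gadget constraints must be expressible with clauses of width at most three, which can be ensured by the standard width-reduction trick using fresh auxiliary variables. A second subtlety is arranging that no ``shortcut'' edges accidentally connect solutions encoding non-adjacent configurations, which is handled by designing the phase variables so that every deviation from the intended sequence is blocked by some clause.

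Finally, to transfer the hardness from \noun{st-Conn} to \noun{Conn}, I would extend the construction so that $G(\phi)$ is guaranteed to contain, in addition to the component of $\boldsymbol{s}$, a fixed ``universal'' component attached to $c_2$ (for instance via an isolated gadget with a known solution profile), and arrange via a small bridge sub-gadget that these two pieces merge into one component iff $c_2$ appears in the component of $c_1$. This controlling of the global component structure is the most conceptually delicate step; its correctness reduces to verifying that the bridge gadget contributes no stray edges when the reachability fails, which again is a finite case-check in 3-CNF.
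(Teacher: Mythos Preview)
Your membership argument and your \noun{st-Conn} hardness outline match the paper's sketch: the paper also appeals to a nondeterministic walk plus Savitch for membership, and for hardness describes a direct reduction from the computation of a polynomial-space Turing machine, encoding configurations as assignments and engineering the clauses so that single-bit flips simulate transitions.

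The divergence is in how \noun{Conn} hardness is obtained. In the paper (following Gopalan et al.), the \emph{same} formula $\phi$ used for \noun{st-Conn} already works: the construction builds a modified machine $M'$ from $M$ and $w$ and guarantees that \emph{every} satisfying assignment of $\phi$ is connected to either $\boldsymbol{s}$ or $\boldsymbol{t}$. Thus $G(\phi)$ has at most two components, and it is connected iff $\boldsymbol{s}$ and $\boldsymbol{t}$ are connected, i.e., iff $M$ accepts $w$. No extra bridge gadget is needed.

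Your proposed route---adding a ``universal'' component near $c_2$ and a bridge that merges it with the component of $\boldsymbol{s}$ exactly when $c_2$ is reachable---has a gap as stated: you do not control the remaining components. If the configuration space of $M$ contains configurations unreachable from $c_1$ (and distinct from $c_2$), those yield additional components of $G(\phi)$, so $G(\phi)$ is disconnected regardless of whether $c_1$ reaches $c_2$, and the reduction fails. To make your approach work you would have to argue, as Gopalan et al.\ do, that the encoding forces every solution into one of two designated components; but once you have that, the bridge gadget is superfluous.
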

Showing that the problems are in PSPACE is straightforward: Given
a CNF($S_{3}$)-formula $\phi$ and two solutions $\boldsymbol{s}$
and $\boldsymbol{t}$, we can guess a path of length at most $2^{n}$
between them and verify that each vertex along the path is indeed
a solution. Hence \noun{st-Conn}($S_{3}$) is in $\mathrm{NPSPACE}$,
which equals PSPACE by Savitch\textquoteright s theorem. For \noun{Conn}($S_{3}$),
by reusing space we can check for all pairs of vectors whether they
are satisfying, and, if they both are, whether they are connected
in $G(\phi)$.

The hardness-proof is quite intricate: it consists of a direct reduction
from the computation of a space-bounded Turing machine $M$. The input-string
$w$ of $M$ is mapped to a CNF($S_{3}$)-formula $\phi$ and two
satisfying assignments $\boldsymbol{s}$ and $\boldsymbol{t}$, corresponding
to the initial and accepting configuration of a Turing machine $M'$
constructed from $M$ and $w$, s.t. $\boldsymbol{s}$ and $\boldsymbol{t}$
are connected in $G(\phi)$ iff $M$ accepts $w$. Further, all satisfying
assignments of $\phi$ are connected to either $\boldsymbol{s}$ or
$\boldsymbol{t}$, so that $G(\phi)$ is connected iff $M$ accepts
$w$.
\begin{lem}
\label{lem:gop}\emph{\citep[Lemm 3.7]{gop} }For $n\geq2$ , there
is an $n$-ary Boolean function $f$ with $f(1,\ldots,1)=1$ and a
diameter of at least $2^{\left\lfloor \frac{n}{2}\right\rfloor }$.
\end{lem}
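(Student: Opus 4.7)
The plan is to exhibit an induced path $P$ in the $n$-dimensional hypercube of length at least $2^{\lfloor n/2\rfloor}$ that passes through $\boldsymbol{1}^{n}$, and to let $f$ be the indicator of $V(P)$. Since $P$ is induced, $G(f)$ equals $P$ as a graph, so the diameter of $G(f)$ is exactly the length of $P$, and $f(\boldsymbol{1}^{n})=1$ by construction.

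For even $n$ I would proceed by induction, doubling the path length at each step $n\to n+2$. The base case $n=2$ is the three-vertex path $(0,0),(0,1),(1,1)$, which is induced and contains $\boldsymbol{1}^{2}$. For the inductive step, given an induced path $P_{n}=v_{0},v_{1},\ldots,v_{L}$ with $v_{L}=\boldsymbol{1}^{n}$ (relabel if needed), I would build in the $(n+2)$-cube the sequence
\[
P_{n+2}\;:\;(v_{0},00),\ldots,(v_{L},00),(v_{L},10),(v_{L},11),(v_{L-1},11),\ldots,(v_{0},11),
\]
of length $2L+2\ge 2^{(n+2)/2}$, containing $(v_{L},11)=\boldsymbol{1}^{n+2}$.

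The main obstacle --- and really the only nontrivial check --- is that $P_{n+2}$ is still induced in the $(n+2)$-cube. I would argue in three parts: (i) any vertex ending in $00$ and any vertex ending in $11$ already differ in two coordinates, so no cube-edge crosses between the two layers; (ii) within each layer, inducedness transfers directly from $P_{n}$; and (iii) the bridge vertex $(v_{L},10)$ differs by a single bit on the last two coordinates from each layer, so a cube-neighbor of $(v_{L},10)$ lying on the path must agree with $v_{L}$ on the first $n$ coordinates --- forcing that neighbor to be $(v_{L},00)$ or $(v_{L},11)$, which are precisely its intended path neighbors. The verification is routine but is where any subtlety lies.

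For odd $n=2k+1$, I would apply the even construction to the first $n-1=2k$ variables and leave the last variable unconstrained, i.e.\ take $R\times\{0,1\}$. Any path in $G(R)$ lifts to $G(R\times\{0,1\})$, so the diameter bound $2^{k}=2^{\lfloor n/2\rfloor}$ persists, and $(\boldsymbol{1}^{n-1},1)=\boldsymbol{1}^{n}$ remains a solution, completing the lemma.
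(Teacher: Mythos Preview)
Your overall strategy---build a long induced path (a ``snake'') in the $n$-cube and take its indicator function---is sound and in the spirit of the paper, which simply points to a direct construction in \cite{gop} without spelling one out. The inducedness checks (i)--(iii) for $P_{n+2}$ are correct, and the odd-$n$ reduction via $R\times\{0,1\}$ is fine.

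There is, however, a genuine gap in the inductive step. Your induction hypothesis tacitly requires $\boldsymbol{1}^{n}$ to be an \emph{endpoint} of $P_{n}$ (that is what ``$v_{L}=\boldsymbol{1}^{n}$, relabel if needed'' can buy you at best), and the base case does satisfy this. But your constructed path
\[
P_{n+2}:\;(v_{0},00),\ldots,(v_{L},00),(v_{L},10),(v_{L},11),(v_{L-1},11),\ldots,(v_{0},11)
\]
has endpoints $(v_{0},00)$ and $(v_{0},11)$, while $\boldsymbol{1}^{n+2}=(v_{L},11)$ sits in the interior. No relabeling or path reversal moves an interior vertex to an endpoint, so the hypothesis is not reproduced and the induction does not close.

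The fix is easy and worth writing explicitly: run your doubling construction \emph{without} any constraint involving $\boldsymbol{1}^{n}$ to obtain, for every even $n$, an induced path of length at least $2^{n/2}$ in the $n$-cube. Then apply a hypercube automorphism---e.g.\ XOR every vertex with $v_{L}\oplus\boldsymbol{1}^{n}$, which is an isometry of the cube and hence preserves ``induced path''---to send one endpoint to $\boldsymbol{1}^{n}$. This yields the desired function with $f(\boldsymbol{1}^{n})=1$, and the odd case then goes through as you wrote it.
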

The proof of this lemma is by direct construction of such a formula.

\section{\label{sec:Cir}Circuits, Formulas, and Post's Lattice}

An $n$-ary\emph{ Boolean function} is a function $f:\{0,1\}^{n}\rightarrow\{0,1\}$.
Let $B$ be a finite set of Boolean functions.

A \emph{$B$-circuit} $\mathcal{C}$ with input variables $x_{1},\ldots,x_{n}$
is a directed acyclic graph, augmented as follows: Each node (here
also called \emph{gate}) with indegree 0 is labeled with an $x_{i}$
or a 0-ary function from $B$, each node with indegree $k>0$ is labeled
with a $k$-ary function from $B$. The edges (here also called \emph{wires})
pointing into a gate are ordered. One node is designated the output
gate. Given values $a_{1},\ldots,a_{n}\in\{0,1\}$ to $x_{1},\ldots,x_{n}$,
$\mathcal{C}$ computes an $n$-ary function $f_{\mathcal{C}}$ as
follows: A gate $v$ labeled with a variable $x_{i}$ returns $a_{i}$,
a gate $v$ labeled with a function $f$ computes the value $f(b_{1},\ldots,b_{k})$,
where $b_{1},\ldots,b_{k}$ are the values computed by the predecessor
gates of $v$, ordered according to the order of the wires. For a
more formal definition see \citep{Vollmer:1999:ICC:520668}.

A \emph{$B$-formula} is defined inductively: A variable $x$ is a
$B$-formula. If $\phi_{1},\ldots,\phi_{m}$ are $B$-formulas, and
$f$ is an $n$-ary function from $B$, then $f(\phi_{1},\ldots,\phi_{n})$
is a $B$-formula. In turn, any $B$-formula defines a Boolean function
in the obvious way, and we will identify $B$-formulas and the function
they define.

It is easy to see that the functions computable by a $B$-circuit,
as well as the functions definable by a $B$-formula, are exactly
those that can be obtained from $B$ by \emph{superposition}, together
with all projections \citep{bloc}. By superposition, we mean substitution
(that is, composition of functions), permutation and identification
of variables, and introduction of \emph{fictive variables} (variables
on which the value of the function does not depend). This class of
functions is denoted by $[B]$. $B$ is \emph{closed} (or said to
be a \emph{clone}) if $[B]=B$. A \emph{base} of a clone $F$ is any
set $B$ with $[B]=F$.

Already in the early 1920s, Emil Post extensively studied Boolean
functions \citep{post1941two}. He identified all clones, found a
finite base for each of them, and detected their inclusion structure:
The clones form a lattice, called \emph{Post's lattice}, depicted
in Figure \ref{fig:Post's-lattice}. 

The following clones are defined by properties of the functions they
contain, all other ones are intersections of these. Let $f$ be an
$n$-ary Boolean function.
\begin{itemize}
\item $\mathsf{BF}$ is the class of all Boolean functions.
\item $\mathsf{R}_{0}$ ($\mathsf{R}_{1}$) is the class of all 0-reproducing
(1-reproducing) functions,\\
$f$\emph{ }is\emph{ $c$-reproducing}, if $f(c,\ldots,c)=c$, where
$c\in\{0,1\}$.
\item $\mathsf{M}$ is is the class of all monotone functions,\emph{}\\
$f$\emph{ }is\emph{ monotone}, if $a_{1}\leq b_{1},\ldots,a_{n}\leq b_{n}$
implies $f(a_{1},\ldots,a_{n})\leq f(b_{1},\ldots,b_{n})$.
\item $\mathsf{D}$ is the class of all self-dual functions,\emph{}\\
$f$\emph{ }is\emph{ self-dual}, if $f(x_{1},\ldots,x_{n})=\overline{f(\overline{x_{1}},\ldots,\overline{x_{n}})}$.
\item $\mathsf{L}$ is the class of all affine (on \emph{linear}) functions,\emph{}\\
$f$\emph{ }is\emph{ affine}, if $f(x_{1},\ldots,x_{n})=x_{i_{1}}\oplus\cdots\oplus x_{i_{m}}\oplus c$
with $i_{1},\ldots,i_{m}\in\{1,\ldots,n\}$ and $c\in\{0,1\}$.
\item $\mathsf{S}_{0}$ ($\mathsf{S}_{1}$) is the class of all 0-separating
(1-separating) functions,\emph{}\\
$f$\emph{ }is\emph{ $c$-separating}, if there exists an $i\in\{1,\ldots,n\}$
s.t. $a_{i}=c$ for all $\boldsymbol{a}\in f^{-1}(c)$, where $c\in\{0,1\}$.
\item $\mathsf{S}_{0}^{m}$ ($\mathsf{S}_{1}^{m}$) is the class of all
functions that are 0-separating (1-separating) of degree $m$,\\
$f$\emph{ }is\emph{ $c$-separating of degree $m$}, if for all $U\subseteq f^{-1}(c)$
of size $|U|=m$ there exists an $i\in\{1,\ldots,n\}$ s.t. $a_{i}=c$
for all $\boldsymbol{a}\in U$ ($c\in\{0,1\}$, $m\geq2$).
\end{itemize}
The definitions and bases of all classes are given in Table \ref{tab:List-of-all}.
For an introduction to Post's lattice and further references see e.g.
\citep{bloc}.

\begin{figure}[p]
\begin{tikzpicture}[clone/.style={circle,inner sep=0,minimum width=6mm,fill=white,font=\small},x=0.95cm,y=0.75cm,
edge/.style={draw=black,thin,-},
clonefill/.style={draw=black,circle,minimum width=6.1mm,inner sep=0,font=\scriptsize},
clonefil/.style={draw=black,very thick,circle,minimum width=6.1mm,inner sep=0,font=\scriptsize},scale=0.87, transform shape]

\input{lattice}

	\node[clonefil,fill=WSAT] at	 (BF) 		{$\CloneBF$};
	\node[clonefill,fill=WSAT] at  (R1)  	{$\CloneR_1$};
	\node[clonefil,fill=WSAT] at  (R0) 		{$\CloneR_0$};
	\node[clonefill,fill=WSAT] at  (R2)  	{$\CloneR_2$};
	
	\node[clonefill,fill=white] at  (M)  		{$\CloneM$};
	\node[clonefill,fill=white] at  (M1)  	{$\CloneM_1$};
	\node[clonefill,fill=white] at  (M0) 		{$\CloneM_0$};
	\node[clonefill,fill=white] at  (M2)  	{$\CloneM_2$};

	\node[clonefil,fill=WSAT] at  (S21) 	{$\CloneS_{1}^2$};
	\node[clonefil,fill=WSAT] at  (S31)		{$\CloneS_{1}^3$};
	\node[clonefil,fill=WSAT] at  (Sn1)		{$\CloneS_{1}^n$};
	\node[clonefil,fill=WSAT] at  (S1)		{$\CloneS_{1}$};

	\node[clonefill,fill=WSAT] at  (S212) 	{$\CloneS_{12}^2$};
	\node[clonefill,fill=WSAT] at  (S312) 	{$\CloneS_{12}^3$};
	\node[clonefill,fill=WSAT] at  (Sn12) 	{$\CloneS_{12}^n$};
	\node[clonefill,fill=WSAT] at  (S12)	  	{$\CloneS_{12}$};

	\node[clonefill,fill=white] at  (S211) 	{$\CloneS_{11}^2$};
	\node[clonefill,fill=white] at  (S311) 	{$\CloneS_{11}^3$};
	\node[clonefill,fill=white] at  (Sn11) 	{$\CloneS_{11}^n$};
	\node[clonefill,fill=white] at  (S11)	  	{$\CloneS_{11}$};

	\node[clonefill,fill=white] at  (S210) 	{$\CloneS_{10}^2$};
	\node[clonefill,fill=white] at  (S310) 	{$\CloneS_{10}^3$};
	\node[clonefill,fill=white] at  (Sn10) 	{$\CloneS_{10}^n$};
	\node[clonefill,fill=white] at  (S10)	  	{$\CloneS_{10}$};

	\node[clonefill,fill=WSAT] at  (S20) 	{$\CloneS_{0}^2$};
	\node[clonefill,fill=WSAT] at  (S30) 	{$\CloneS_{0}^3$};
	\node[clonefill,fill=WSAT] at  (Sn0) 	{$\CloneS_{0}^n$};
	\node[clonefill,fill=nWSAT] at  (S0)	  	{$\CloneS_{0}$};

	\node[clonefill,fill=WSAT] at  (S202) 	{$\CloneS_{02}^2$};
	\node[clonefill,fill=WSAT] at  (S302) 	{$\CloneS_{02}^3$};
	\node[clonefill,fill=WSAT] at  (Sn02) 	{$\CloneS_{02}^n$};
	\node[clonefill,fill=nWSAT] at  (S02)	  	{$\CloneS_{02}$};

	\node[clonefill,fill=white] at  (S201) 	{$\CloneS_{01}^2$};
	\node[clonefill,fill=white] at  (S301) 	{$\CloneS_{01}^3$};
	\node[clonefill,fill=white] at  (Sn01) 	{$\CloneS_{01}^n$};
	\node[clonefill,fill=white] at  (S01)	  	{$\CloneS_{01}$};
	
	\node[clonefill,fill=white] at  (S200) 	{$\CloneS_{00}^2$};
	\node[clonefill,fill=white] at  (S300) 	{$\CloneS_{00}^3$};
	\node[clonefill,fill=white] at  (Sn00) 	{$\CloneS_{00}^n$};
	\node[clonefill,fill=white] at  (S00)	  	{$\CloneS_{00}$};
	
	\node[clonefill,fill=WSAT] at  (D)  		{$\CloneD$};
	\node[clonefill,fill=WSAT] at  (D1)  	{$\CloneD_1$};
	\node[clonefill,fill=white] at  (D2)  	{$\CloneD_2$};
	
	\node[clonefill,fill=white] at  (E)  		{$\CloneE$};
	\node[clonefill,fill=white] at  (E1)  	{$\CloneE_1$};
	\node[clonefill,fill=white] at  (E0) 		{$\CloneE_0$};
	\node[clonefill,fill=white] at  (E2)  	{$\CloneE_2$};
	
	\node[clonefill,fill=white] at  (V)  		{$\CloneV$};
	\node[clonefill,fill=white] at  (V0) 		{$\CloneV_0$};
	\node[clonefill,fill=white] at  (V1)  	{$\CloneV_1$};
	\node[clonefill,fill=white] at  (V2)  	{$\CloneV_2$};
	
	\node[clonefill,fill=white] at  (L)  		{$\CloneL$};
	\node[clonefill,fill=white] at  (L0)  	{$\CloneL_0$};
	\node[clonefill,fill=white] at  (L1)  	{$\CloneL_1$};
	\node[clonefill,fill=white] at  (L3) 		{$\CloneL_3$};
	\node[clonefill,fill=white] at  (L2)  	{$\CloneL_2$};
	
	\node[clonefill,fill=white] at  (N)  		{$\CloneN$};
	\node[clonefill,fill=white] at  (N2)  	{$\CloneN_2$};
	
	\node[clonefill,fill=white] at  (I)  		{$\CloneI$};
	\node[clonefill,fill=white] at  (I0)  	{$\CloneI_0$};
	\node[clonefill,fill=white] at  (I1)  	{$\CloneI_1$};
	\node[clonefill,fill=white] at  (I2)  	{$\CloneI_2$};
	
\end{tikzpicture}
    \bigskip

\protect\caption{\label{fig:Post's-lattice}Graphical representation of Post's lattice.\protect \\
The classes on the hard side of the dichotomy for the connectivity
problems and the diameter are shaded gray; the light gray shaded ones
are only on the hard side for formulas with quantifiers.\protect \\
For comparison, the classes for which SAT (without quantifiers) is
NP-complete are circled bold.}
\end{figure}

\begin{table}[p]

  \fontsize{8}{9}\selectfont
    \setlength{\tabcolsep}{6pt}
    \rowcolors{2}{gray!10}{}
    \centering
    \begin{tabular}{llll}
    \toprule
    Class  & Definition & Base\\
    \midrule
    $\CloneBF$    & All Boolean functions & $\{x \land y ,\neg x\}$ &\\
    $\CloneR_0$   & $\{f \in \CloneBF \mid f \mbox{ is 0-reproducing}\}$ & $\{ x \land y , x \oplus y \}$ &\\
    $\CloneR_1$   & $\{f \in \CloneBF \mid f \mbox{ is 1-reproducing}\}$ & $\{x \lor y, x \leftrightarrow y \}$ &\\
    $\CloneR_2$   & $\CloneR_0 \cap \CloneR_1$ & $\{x \lor y, x \land (y \leftrightarrow z) \}$ &\\
    $\CloneM$     & $\{f \in \CloneBF \mid f \mbox{ is monotone}\}$ & $\{x \land y, x\lor y,0,1\}$ &\\
    $\CloneM_0$   & $\CloneM \cap \CloneR_0$ & $\{x \land y,x\lor y,0\}$ &\\
    $\CloneM_1$   & $\CloneM \cap \CloneR_1$ & $\{x \land y,x\lor y,1\}$ &\\  
    $\CloneM_2$   & $\CloneM \cap \CloneR_2$ & $\{x \land y,x\lor y\}$ &\\
    $\CloneS_0$   & $\{f \in \CloneBF \mid f \mbox{ is 0-separating}\}$ & $\{x \rightarrow y\}$ &\\
    $\CloneS_0^n$ & $\{f \in \CloneBF \mid f \mbox{ is 0-separating of degree $n$}\}$ & $\{x \rightarrow y,\mbox{dual}({\mathrm{T}^{n+1}_n})\}$ &\\
    $\CloneS_1$   & $\{f \in \CloneBF \mid f \mbox{ is 1-separating}\}$ & $\{x \nrightarrow y\}$ &\\
    $\CloneS_1^n$ & $\{f \in \CloneBF \mid f \mbox{ is 1-separating of degree $n$}\}$ & $\{x \nrightarrow y,\mathrm{T}^{n+1}_n\}$ &\\
    $\CloneS_{02}^n$& $\CloneS_0^n \cap \CloneR_2$ & $\{x \lor (y \land \neg z), \mbox{dual}({\mathrm{T}^{n+1}_n})\}$ &\\
    $\CloneS_{02}$  & $\CloneS_0 \cap \CloneR_2$ & $\{x \lor (y \land \neg z)\}$ &\\
    $\CloneS_{01}^n$& $\CloneS_0^n \cap \CloneM$ & $\{\mbox{dual}({\mathrm{T}^{n+1}_n}),1\}$ &\\
    $\CloneS_{01}$  & $\CloneS_0 \cap \CloneM$ & $\{x \lor (y \land z),1\}$ &\\
    $\CloneS_{00}^n$& $\CloneS_0^n \cap \CloneR_2 \cap \CloneM$ 
    	&$\{x \lor (y \land z),\mbox{dual}({\mathrm{T}^{n+1}_n})\}$& \\
    $\CloneS_{00}$  & $\CloneS_0 \cap \CloneR_2 \cap \CloneM$ & $\{x \lor (y \land z)\}$ &\\
    $\CloneS_{12}^n$& $\CloneS_1^n \cap \CloneR_2$ & $\{x\land (y\lor \neg z),\mathrm{T}^{n+1}_n\}$ &\\
    $\CloneS_{12}$  & $\CloneS_1 \cap \CloneR_2$ & $\{x\land (y\lor \neg z)\}$ &\\
    $\CloneS_{11}^n$& $\CloneS_1^n \cap \CloneM$ & $\{\mathrm{T}^{n+1}_n,0\}$ &\\
    $\CloneS_{11}$  & $\CloneS_1 \cap \CloneM$ & $\{x\land (y\lor z),0\}$ &\\
    $\CloneS_{10}^n$& $\CloneS_1^n \cap \CloneR_2 \cap \CloneM$ 
    & $\{x \land (y \lor z),{\mathrm{T}^{n+1}_n}\}$ &\\
    $\CloneS_{10}$  & $\CloneS_1 \cap \CloneR_2 \cap \CloneM$ & $\{x\land (y\lor z)\}$ &\\
    
    $\CloneD$     & $\{f \in \CloneBF \mid f \mbox{ is self-dual}\}$ & $\{\mbox{maj}(x,\neg y,\neg z)\}$ &\\
    $\CloneD_1$   & $\CloneD \cap \CloneR_2$ & $\{\mbox{maj}(x, y,\neg z)\}$ &\\
    $\CloneD_2$   & $\CloneD \cap \CloneM$ & $\{\mbox{maj}(x,y,z)\}$ &\\  
    
    $\CloneL$     & $\{f \in \CloneBF \mid f \mbox{ is linear}\}$ & $\{x \oplus y,1\}$ &\\
    $\CloneL_0$   & $\CloneL \cap \CloneR_0$ & $\{x \oplus y\}$ &\\
    $\CloneL_1$   & $\CloneL \cap \CloneR_1$ & $\{x \leftrightarrow y\}$ &\\
    $\CloneL_2$   & $\CloneL \cap \CloneR_2$ & $\{x \oplus y \oplus z\}$ &\\
    $\CloneL_3$   & $\CloneL \cap \CloneD$ & $\{x \oplus y \oplus z \oplus 1 \}$ &\\
    
    $\CloneE$     & $\{f \in \CloneBF \mid f \mbox{ is constant or a conjunction}\}$ & $\{x \land y,0, 1\}$ &\\
    $\CloneE_0$   & $\CloneE \cap \CloneR_0$ & $\{x \land y,0\}$ &\\
    $\CloneE_1$   & $\CloneE \cap \CloneR_1$ & $\{x \land y,1\}$ &\\
    $\CloneE_2$   & $\CloneE \cap \CloneR_2$ & $\{x \land y\}$ &\\
    
    $\CloneV$     & $\{f \in \CloneBF \mid f \mbox{ is constant or a disjunction}\}$ & $\{x \lor y,0,1\}$ &\\
    $\CloneV_0$   & $\CloneV \cap \CloneR_0$ & $\{x \lor y,0\}$ &\\
    $\CloneV_1$   & $\CloneV \cap \CloneR_1$ & $\{x \lor y,1\}$ &\\
    $\CloneV_2$   & $\CloneV \cap \CloneR_2$ & $\{x \lor y\}$ &\\
    
    $\CloneN$     & $\{f \in \CloneBF \mid f \mbox{ is essentially unary}\}$ & $\{\neg x,0,1\}$ &\\ 
    $\CloneN_2$   & $\CloneN \cap \CloneD$ & $\{\neg x\}$ &\\

    $\CloneI$     & $\{f \in \CloneBF \mid f \mbox{ is constant or a projection}\}$ & $\{x,0,1\}$ &\\ 
    $\CloneI_0$   & $\CloneI \cap \CloneR_0$ & $\{x,0\}$ &\\
    $\CloneI_1$   & $\CloneI \cap \CloneR_1$ & $\{x,1\}$ &\\
    $\CloneI_2$   & $\CloneI \cap \CloneR_2$ & $\{x\}$ &\\
    \bottomrule
    \end{tabular}
    \bigskip

\protect\caption{\label{tab:List-of-all}List of all closed classes of Boolean functions
with definitions and bases.\protect \\
($T_{k}^{n}$ denotes the threshold function, $T_{k}^{n}(x_{1},\ldots,x_{n})=1\protect\Longleftrightarrow\sum_{i=1}^{n}x_{i}\geq k$,
and dual($f$)$(x_{1},\ldots,x_{n})=\overline{f(\overline{x_{1}},\ldots,\overline{x_{n}})}$) }
\end{table}

The complexity of numerous problems for $B$-circuits and $B$-formulas
has been classified by the types of functions allowed in $B$ with
help of Post's lattice (see e.g. \citep{reith2000,schnoor2007}),
starting with satisfiability: Analogously to Schaefer's dichotomy
for CNF($\mathcal{S}$)-formulas\emph{ }from 1978\emph{,} Harry R.
Lewis shortly thereafter found a dichotomy for $B$-formulas \citep{lewis1979}:
If $[B]$ contains the function $x\wedge\overline{y}$, \noun{Sat}
is NP-complete, else it is in P.

While for $B$-circuits the complexity of every decision problem solely
depends on $[B]$ (up to AC$^{0}$ isomorphisms), for $B$-formulas
this need not be the case (though it usually is, as for satisfiability
and our connectivity problems, as we will see): The transformation
of a $B$-formula into a $B'$-formula might require an exponential
increase in the formula size even if $[B]=[B']$, as the $B'$-representation
of some function from $B$ may need to use some input variable more
than once \citep{michael2012applicability}. For example, let $h(x,y)=x\wedge\overline{y}$;
then $(\text{x\ensuremath{\wedge}y)}\in[\{h\}]$ since $x\wedge y=h(x,h(x,y))$,
but it is easy to see that there is no shorter $\{h\}$-representation
of $x\wedge y$.\newpage{}

\section{Computational and Structural Dichotomies for Connectivity}

Now we consider the connectivity problems for $B$-formulas and $B$-circuits:
\begin{itemize}
\item \noun{BF-Conn($B$)}: Given a $B$-formula $\phi$, is $G(\phi)$
connected?
\item \noun{st-BF-Conn($B$): }Given a $B$-formula $\phi$ and two solutions
$\boldsymbol{s}$ and $\boldsymbol{t}$, is there a path from $\boldsymbol{s}$
to $\boldsymbol{t}$ in $G(\phi)$?
\end{itemize}
The corresponding problems for circuits are denoted \noun{Circ-Conn($B$)
}resp. \noun{st-Circ-Conn($B$)}.
\begin{thm}
\label{thm:func}Let $B$ be a finite set of Boolean functions.\end{thm}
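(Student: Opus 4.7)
The plan is to use Post's lattice (Figure~\ref{fig:Post's-lattice}) as the backbone: since the set of functions realizable by $B$-formulas or $B$-circuits depends only on the clone $[B]$, I classify instances by the smallest clone containing $B$ and handle each clone separately. For circuits this is essentially tight; for formulas I must additionally check that each reduction yields a polynomial-size formula, not just a polynomial-size circuit, because changing between bases of the same clone may incur an exponential blow-up in formula size, as illustrated at the end of Section~\ref{sec:Cir}.

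For the easy side -- the unshaded clones in the figure -- I would show that $G(\phi)$ always has diameter $O(n)$ and that both connectivity problems are decidable in polynomial time. If $[B]\subseteq\CloneL$, the solution set is a coset of an $\mathbb{F}_2$-affine subspace, so $G(\phi)$ is a hypercube after relabeling, always connected and with linear diameter. If $[B]\subseteq\CloneM$, monotonicity lets me connect any two solutions $\boldsymbol{a},\boldsymbol{b}$ via $\boldsymbol{a}\vee\boldsymbol{b}$ by monotone sequences of single-bit flips, giving diameter at most $2n$; \noun{BF-Conn} then reduces to computing and comparing the minimal solutions. The smaller easy clones ($\CloneE$, $\CloneV$, $\CloneN$, $\CloneI$, $\CloneD_2$) yield even more rigid structures, and for the easy $\CloneS$-clones $\CloneS_{00}$, $\CloneS_{01}$, $\CloneS_{10}$, $\CloneS_{11}$ (and their degree variants) I would exploit the separating property to pin a coordinate to a constant and reduce to a simpler clone.

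For the hard side -- the (dark-)gray clones in the figure -- I would reduce from \noun{st-Conn}$(S_3)$ and \noun{Conn}$(S_3)$, which are PSPACE-complete by Lemma~\ref{lem:-gopa}. For each minimal hard clone I would realize the four 3-clause relations $D_0,\ldots,D_3$ as $B$-formulas or circuits by polynomial-size gadgets, possibly introducing auxiliary variables whose values are forced in every accepting assignment so that they do not distort the connectivity among the original variables. The exponential diameter lower bound follows by rendering Gopalan et al.'s diameter-$2^{\lfloor n/2\rfloor}$ function from Lemma~\ref{lem:gop} as a $B$-formula via the same gadget technique.

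The principal obstacle is the self-dual part of the hard side, namely $\CloneD$ and $\CloneD_1$, together with those $\CloneS_0$/$\CloneS_1$-hierarchy clones that are hard for connectivity but easy for \noun{Sat}: Lewis's reduction is unavailable because they lack $x\wedge\overline y$ and admit no constants among their base functions, so the naive ``force an auxiliary variable to $0$'' trick is not immediately at hand. For these I would use a majority-based gadget together with carefully introduced pairs of complementary auxiliary variables to emulate the $S_3$-constraints while preserving reachability, and verify that the construction remains faithful in the formula setting, not merely in the circuit setting.
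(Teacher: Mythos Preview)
Your plan has two concrete gaps on the easy side and a different---and not obviously workable---strategy on the hard side.

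\medskip
\textbf{The $\CloneL$ case is wrong.} You claim that when $[B]\subseteq\CloneL$ the solution set is an affine subspace and therefore $G(\phi)$ is ``a hypercube after relabeling, always connected''. That is false: the edges of $G(\phi)$ are single-bit flips in the \emph{original} coordinates, not in coordinates adapted to the subspace. For $\phi=x_1\oplus x_2$ the solutions are $\{01,10\}$, which are at Hamming distance~$2$ and hence disconnected. The correct statement (Lemma~\ref{lem:L}) is that two solutions are connected iff they differ only in \emph{fictive} variables, so $G(\phi)$ is connected iff at most one variable is non-fictive; the polynomial-time algorithm must identify the fictive variables.

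\medskip
\textbf{You are missing the $\CloneS_0$ case.} The easy side of the dichotomy is $B\subseteq\CloneM$, $B\subseteq\CloneL$, \emph{or} $B\subseteq\CloneS_0$. Your plan only treats $\CloneM$ and $\CloneL$. The $\CloneS$-clones you list ($\CloneS_{00},\CloneS_{01},\CloneS_{10},\CloneS_{11}$) are all contained in $\CloneM$ by definition (Table~\ref{tab:List-of-all}), so they add nothing. What you need is the argument for $\CloneS_0$ itself (and $\CloneS_{02}$), which is \emph{not} monotone: if $f$ is $0$-separating with witness coordinate~$i$, then every vector with $i$th bit~$1$ is a solution, so any two solutions are connected through that half-cube with diameter at most $|\boldsymbol{a}-\boldsymbol{b}|+2$. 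Without this, your case analysis of Post's lattice does not cover the boundary.

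\medskip
\textbf{On the hard side, the paper's approach is different from yours.} You propose to realize each $3$-clause relation $D_0,\ldots,D_3$ by a $B$-gadget with per-clause auxiliary variables ``whose values are forced''. The paper instead applies a \emph{global} transformation $T_\psi$ to an entire $1$-reproducing $3$-CNF, adding a fixed number of extra variables (one for $\CloneS_{12}$, three for $\CloneD_1$, $k{+}2$ for $\CloneS_{02}^k$) and proving $T_\wedge(T_{\psi_1},T_{\psi_2})\equiv T_{\psi_1\wedge\psi_2}$, so that a balanced recursion yields a polynomial-size $B$-formula. Your per-clause approach is not obviously viable for $\CloneD_1$: a single $3$-clause is not self-dual, the clone contains no constants, and ``forcing'' an auxiliary variable to a fixed value inside a self-dual function is impossible (the complementary assignment would have the opposite value). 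You would need to explain how your gadgets achieve this, or switch to a global transformation as the paper does.
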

\begin{enumerate}
\item If $B\subseteq\mathsf{M}$, $B\subseteq\mathsf{L}$, or $B\subseteq\mathsf{\mathsf{S}_{0}}$,
then

\begin{enumerate}
\item \noun{st-Circ-Conn(}$B$\emph{)} and \noun{Circ-Conn(}$B$\emph{)}
are in \noun{P},

\begin{enumerate}
\item \noun{st-BF-Conn(}$B$\emph{)} and \noun{BF-Conn(}$B$\emph{)} are
in \noun{P},
\item the diameter of every function $f\in[B]$ is linear in the number
of variables of $f$.
\end{enumerate}
\item Otherwise,

\begin{enumerate}
\item \noun{st-Circ-Conn(}$B$\emph{)} and \noun{Circ-Conn(}$B$\emph{)}
are $\mathrm{PSPACE}$-complete,
\item \noun{st-BF-Conn(}$B$\emph{)} and \noun{BF-Conn(}$B$\emph{)} are
$\mathrm{PSPACE}$-complete,
\item there are functions $f\in[B]$ such that their diameter is exponential
in the number of variables of $f$.
\end{enumerate}
\end{enumerate}
\end{enumerate}
The proof follows from the Lemmas in the next subsections. By the
following proposition, we can relate the complexity of $B$-formulas
and $B$-circuits.
\begin{prop}
\label{pro:trf}Every $B$-formula $\phi$ can be transformed into
an equivalent $B$-circuit $\mathcal{C}$ in polynomial time.\end{prop}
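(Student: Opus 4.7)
The plan is a direct structural induction on the formula: essentially, interpret the parse tree of $\phi$ as a tree-shaped circuit. First I would parse $\phi$ to obtain its abstract syntax tree; since $B$ is fixed and each connective has bounded arity, this takes linear time in $|\phi|$ by standard techniques.

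Next I would construct $\mathcal{C}$ recursively. For the base case, a single variable $x$ becomes one input gate labeled $x$, which is also designated as the output gate. For the inductive step, if $\phi = f(\phi_1,\ldots,\phi_n)$ with $f \in B$ of arity $n$, I obtain circuits $\mathcal{C}_1,\ldots,\mathcal{C}_n$ for $\phi_1,\ldots,\phi_n$ by the induction hypothesis, take their disjoint union, add a fresh gate $v$ labeled $f$ with incoming wires from the output gates of $\mathcal{C}_1,\ldots,\mathcal{C}_n$ (in the correct order), and declare $v$ the new output gate. The definition of a $B$-circuit in Section~\ref{sec:Cir} allows several indegree-$0$ gates to share the same variable label $x_i$, so multiple occurrences of the same variable in $\phi$ cause no difficulty; one could optionally identify all gates labeled by a given $x_i$ to obtain a smaller DAG, but this is not needed for the statement.

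For correctness, an easy induction on the structure of $\phi$ shows that under any assignment to the input variables, the output gate of the circuit built from a subformula $\psi$ computes precisely the value of $\psi$ on that assignment; applied to $\phi$ itself, this gives $f_{\mathcal{C}} = \phi$. For efficiency, each occurrence of a variable or function symbol in $\phi$ contributes exactly one gate and at most a constant number of wires, so $\mathcal{C}$ has size $O(|\phi|)$ and the whole construction runs in linear time.

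There is no real obstacle here: the content of the proposition is essentially that, in contrast to the formula-to-formula translations discussed just before the statement (where the $\{h\}$-representation of $x \wedge y$ witnesses that an exponential blow-up can occur even within a single clone), a formula-to-circuit translation never needs to duplicate subcomputations, because a DAG allows sharing and, in fact, a bare parse tree of $\phi$ already qualifies as a $B$-circuit.
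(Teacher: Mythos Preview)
Your proof is correct and follows essentially the same approach as the paper: interpret the parse tree of $\phi$ directly as a tree-shaped $B$-circuit, with one gate per occurrence of a function symbol. The only cosmetic difference is that the paper creates a single input gate per distinct variable of $\phi$ (so function gates may have outdegree $\leq 1$ but variable gates can have higher outdegree), whereas you create one gate per variable \emph{occurrence} and note the optional identification; either variant clearly works and stays within linear size.
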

\begin{proof}
Any $B$-formula is equivalent to a special $B$-circuit where all
function-gates have outdegree at most one: For every variable $x$
of $\phi$ and for every occurrence of a function $f$ in $\phi$
there is a gate in $\mathcal{C}$, labeled with $x$ resp. $f$. It
is clear how to connect the gates.
\end{proof}

\subsection{\label{sub:The-easy-cases}The Easy Side of the Dichotomy}
\begin{lem}
\label{lem:M}If $B\subseteq\mathsf{M}$, the solution graph of any
$n$-ary function $f\in[B]$ is connected, and $d_{f}(\boldsymbol{a},\boldsymbol{b})=|\boldsymbol{a}-\boldsymbol{b}|\leq n$
for any two solutions $\boldsymbol{a}$ and $\boldsymbol{b}$.\end{lem}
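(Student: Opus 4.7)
The plan is to exploit monotonicity. Since $\mathsf{M}$ is a clone, $B\subseteq\mathsf{M}$ implies $[B]\subseteq\mathsf{M}$, so every $f\in[B]$ is monotone. I will show that any two solutions $\boldsymbol{a},\boldsymbol{b}$ can be connected by a path of length exactly $|\boldsymbol{a}-\boldsymbol{b}|$, which gives both connectivity and the distance bound, since $|\boldsymbol{a}-\boldsymbol{b}|\le n$ trivially and since no path in the hypercube can be shorter than the Hamming distance of its endpoints.

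The key observation is that the bitwise OR $\boldsymbol{a}\vee\boldsymbol{b}$ is again a solution: monotonicity of $f$ together with $\boldsymbol{a}\le\boldsymbol{a}\vee\boldsymbol{b}$ yields $f(\boldsymbol{a}\vee\boldsymbol{b})\ge f(\boldsymbol{a})=1$. I would then build the path in two legs. From $\boldsymbol{a}$ to $\boldsymbol{a}\vee\boldsymbol{b}$, flip the coordinates where $a_i=0$ and $b_i=1$ one at a time in any order; every intermediate vector $\boldsymbol{a}'$ satisfies $\boldsymbol{a}\le\boldsymbol{a}'\le\boldsymbol{a}\vee\boldsymbol{b}$, so by monotonicity $f(\boldsymbol{a}')\ge f(\boldsymbol{a})=1$. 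Symmetrically, from $\boldsymbol{b}$ to $\boldsymbol{a}\vee\boldsymbol{b}$ flip the coordinates where $b_i=0$ and $a_i=1$; again all intermediates lie between $\boldsymbol{b}$ and $\boldsymbol{a}\vee\boldsymbol{b}$, hence are solutions.

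Concatenating the first leg with the reverse of the second gives a walk from $\boldsymbol{a}$ to $\boldsymbol{b}$ in $G(f)$ whose length equals $|\{i:a_i=0,b_i=1\}|+|\{i:a_i=1,b_i=0\}|=|\boldsymbol{a}-\boldsymbol{b}|$. Since any path in the hypercube graph between $\boldsymbol{a}$ and $\boldsymbol{b}$ must flip each differing coordinate at least once, $d_f(\boldsymbol{a},\boldsymbol{b})\ge|\boldsymbol{a}-\boldsymbol{b}|$, so equality holds. As $|\boldsymbol{a}-\boldsymbol{b}|\le n$, the diameter bound follows, and connectivity is immediate. There is no real obstacle here; the argument is the standard one that monotone relations induce convex, geodesic subgraphs of the Boolean lattice.
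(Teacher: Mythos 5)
Your proof is correct and takes essentially the same route as the paper's: both connect $\boldsymbol{a}$ to $\boldsymbol{b}$ through $\boldsymbol{a}\vee\boldsymbol{b}$ by first flipping the coordinates where $\boldsymbol{a}$ is 0 and $\boldsymbol{b}$ is 1 and then those where $\boldsymbol{a}$ is 1 and $\boldsymbol{b}$ is 0, using monotonicity to certify every intermediate vector. Your write-up merely makes explicit two points the paper leaves implicit (that $[B]\subseteq\mathsf{M}$ because $\mathsf{M}$ is a clone, and the lower bound $d_f(\boldsymbol{a},\boldsymbol{b})\ge|\boldsymbol{a}-\boldsymbol{b}|$ needed for equality), so there is nothing substantive to add.
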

\begin{proof}
Table \ref{tab:List-of-all} shows that $f$ is monotone in this case.
Thus, either $f=0$, or $(1,\ldots,1)$ must be a solution, and every
other solution $\boldsymbol{a}$ is connected to $(1,\ldots,1)$ in
$G(\phi)$ since $(1,\ldots,1)$ can be reached by flipping the variables
assigned 0 in $\boldsymbol{a}$ one at a time to 1. Further, if $\boldsymbol{a}$
and $\boldsymbol{b}$ are solutions, $\boldsymbol{b}$ can be reached
from $\boldsymbol{a}$ in $|\boldsymbol{a}-\boldsymbol{b}|$ steps
by first flipping all variables that are assigned 0 in $\boldsymbol{a}$
and 1 in $\boldsymbol{b}$, and then flipping all variables that are
assigned 1 in $\boldsymbol{a}$ and 0 in $\boldsymbol{b}$.\end{proof}
\begin{lem}
If $B\subseteq\mathsf{S}_{0}$, the solution graph of any function
$f\in[B]$ is connected, and $d_{f}(\boldsymbol{a},\boldsymbol{b})\leq|\boldsymbol{a}-\boldsymbol{b}|+2$
for any two solutions $\boldsymbol{a}$ and $\boldsymbol{b}$.\end{lem}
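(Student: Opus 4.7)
The plan is to exploit the 0-separating property of every $f \in [B]$ to identify a ``pivot'' coordinate whose being set to $1$ already certifies a solution, and then to route every path through the solution-rich sub-half-cube where this pivot is $1$.

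First I would note that $\mathsf{S}_0$ is a clone, so $B \subseteq \mathsf{S}_0$ gives $[B] \subseteq \mathsf{S}_0$; hence every $f \in [B]$ is 0-separating. Unpacking the definition supplies an index $i$ such that $a_i = 0$ for every $\boldsymbol{a} \in f^{-1}(0)$; contrapositively, any vector with $i$-th coordinate $1$ is automatically a solution of $f$. This is the only structural fact about $f$ that the rest of the argument uses.

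Given two solutions $\boldsymbol{a}$ and $\boldsymbol{b}$, I would build a path in $G(f)$ in three phases: (i) if $a_i = 0$, prepend a single flip of coordinate $i$, arriving at a vector with $i$-th coordinate $1$, which is a solution by the pivot property; (ii) from there, flip the remaining differing coordinates $j \neq i$ one at a time, so that every intermediate vector retains $i$-th coordinate $1$ and thus stays in $f^{-1}(1)$; (iii) if $b_i = 0$, append a final flip of coordinate $i$ to reach $\boldsymbol{b}$. A short case analysis on the pair $(a_i,b_i)$ then shows that three of the four cases cost exactly $|\boldsymbol{a}-\boldsymbol{b}|$ flips (the pivot flip absorbing one unit of Hamming distance whenever exactly one of $a_i,b_i$ equals $0$), while only the case $a_i = b_i = 0$ pays two extra pivot flips, giving the claimed worst-case bound $|\boldsymbol{a}-\boldsymbol{b}|+2$. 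Since such a path exists for every pair of solutions, connectivity follows at the same time.

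No real obstacle is expected. The only minor subtlety is that the constant-$0$ function need not be considered (it fails to be 0-separating as soon as $n\ge 1$), and when $f$ is the constant $1$, any coordinate can serve as pivot and the same three-phase construction works trivially since every vector is already a solution.
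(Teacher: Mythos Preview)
Your proposal is correct and follows essentially the same route as the paper: identify the pivot coordinate $i$ furnished by 0-separation, observe that every vector with $i$-th coordinate $1$ is a solution, and route any path through that half-cube with at most one extra flip at each end. Your added remarks (the clone argument for $[B]\subseteq\mathsf{S}_0$, the case analysis on $(a_i,b_i)$, and the constant-$0$ edge case) are harmless elaborations of the same idea.
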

\begin{proof}
Since $f$ is 0-separating, there is an $i$ such that $a_{i}=0$
for every vector $\boldsymbol{a}$ with $f(\boldsymbol{a})=0$, thus
every $\boldsymbol{b}$ with $b_{i}=1$ is a solution. It follows
that every solution $\boldsymbol{t}$ can be reached from any solution
$\boldsymbol{s}$ in at most $|\boldsymbol{s}-\boldsymbol{t}|+2$
steps by first flipping the $i$-th variable from 0 to 1 if necessary,
then flipping all other variables in which $\boldsymbol{s}$ and $\boldsymbol{t}$
differ, and finally flipping back the $i$-th variable if necessary.\end{proof}
\begin{lem}
\label{lem:L}If $B\subseteq\mathsf{L}$,\end{lem}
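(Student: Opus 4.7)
My plan is to exploit the rigid algebraic structure of affine Boolean functions to completely describe the components of $G(f)$. First I would invoke Table~\ref{tab:List-of-all} to note that any $f\in[B]\subseteq[\mathsf{L}]=\mathsf{L}$ can be written uniquely as $f(x_1,\ldots,x_n)=x_{i_1}\oplus\cdots\oplus x_{i_m}\oplus c$ for some index set $I=\{i_1,\ldots,i_m\}\subseteq\{1,\ldots,n\}$ and some $c\in\{0,1\}$. Call the variables indexed by $I$ the \emph{essential} variables of $f$, and the remaining $n-m$ variables \emph{fictive}.

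The crucial observation is that flipping a single essential variable of a solution toggles the parity of the XOR and thus changes the value of $f$, while flipping a fictive variable leaves $f$ unchanged. Hence, every edge of $G(f)$ must correspond to the flip of a fictive variable, and conversely every such flip, applied at a solution, yields another solution. This gives a complete structural description: the connected components of $G(f)$ are precisely the equivalence classes of solutions agreeing on all essential coordinates, and each such class induces a subgraph isomorphic to the $(n-m)$-dimensional hypercube.

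From this picture, the advertised bounds are immediate. For any two solutions $\boldsymbol{a}$ and $\boldsymbol{b}$ in the same component we have $d_f(\boldsymbol{a},\boldsymbol{b})=|\boldsymbol{a}-\boldsymbol{b}|$, realised by flipping the fictive coordinates on which $\boldsymbol{a}$ and $\boldsymbol{b}$ differ, one at a time in any order; and the diameter of every connected component is at most $n-m\leq n$, which is linear in $n$. I do not anticipate any real obstacle: the whole argument rests on the dichotomy that in an XOR every variable either never matters or always matters, which leaves no room for the kind of two-step detour that was needed in the $\mathsf{S}_0$ case.
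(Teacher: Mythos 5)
Your structural analysis of $G(f)$ for affine $f$ is correct and coincides with the first paragraph of the paper's proof: edges of $G(f)$ can only flip fictive variables, every such flip preserves satisfaction, so the components are exactly the classes of solutions agreeing on the essential coordinates, each a copy of an $(n-m)$-dimensional hypercube; this gives $d_{f}(\boldsymbol{a},\boldsymbol{b})=|\boldsymbol{a}-\boldsymbol{b}|$ and a diameter that is at most $n$. It also yields the criterion (which you should state explicitly) that $G(f)$ is connected iff at most one variable is essential.

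However, this only proves the structural item of the lemma. The lemma also asserts that \noun{st-Circ-Conn}($B$), \noun{Circ-Conn}($B$) and their formula versions are in P, and for that the existence of the representation $f=x_{i_{1}}\oplus\cdots\oplus x_{i_{m}}\oplus c$ is not enough: you are handed a $B$-circuit or $B$-formula, not the index set $I$, and you must show how to \emph{compute} which variables are essential in polynomial time. (For general circuits, deciding whether a variable is fictive is as hard as satisfiability, so this step cannot be waved away; it is where the hypothesis $B\subseteq\mathsf{L}$ does its algorithmic work.) The paper closes this gap by rewriting the given circuit gate-by-gate over the base $\{x\oplus y,1\}$ of $\mathsf{L}$ and then determining fictiveness of $x_{i}$ by counting backward paths from the output gate to the $x_{i}$-gates modulo $2$; an even simpler route, available to you, is to evaluate the circuit at $\boldsymbol{0}$ and at the $n$ unit vectors, since for an affine function this reads off $c$ and the coefficients directly. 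Once the essential set is known, both decision problems are immediate from your structural description, and Proposition~\ref{pro:trf} transfers the result from circuits to formulas. Without some such argument, the complexity claims of the lemma remain unproved.
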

\begin{enumerate}
\item \noun{st-Circ-Conn(}$B$\emph{)} and \noun{Circ-Conn(}$B$\emph{)}
are in P,

\begin{enumerate}
\item \noun{st-BF-Conn(}$B$\emph{)} and \noun{BF-Conn(}$B$\emph{)} are
in P,
\item for any function $f\in[B]$, $d_{f}(\boldsymbol{a},\boldsymbol{b})=|\boldsymbol{a}-\boldsymbol{b}|$
for any two solutions $\boldsymbol{a}$ and $\boldsymbol{b}$ that
lie in the same connected component of $G(\phi)$.
\end{enumerate}
\end{enumerate}
\begin{proof}
Since every function $f\in\mathsf{L}$ is linear, $f(x_{1},\ldots,x_{n})=x_{i_{1}}\oplus\ldots\oplus x_{i_{m}}\oplus c$,
and any two solutions $\boldsymbol{s}$ and $\boldsymbol{t}$ are
connected iff they differ only in fictive variables: If $\boldsymbol{s}$
and $\boldsymbol{t}$ differ in at least one non-fictive variable
(i.e., an $x_{i}\in\{x_{i_{1}},\ldots,x_{i_{m}}\}$), to reach $\mathbf{t}$
from $\mathbf{s}$, $x_{i}$ must be flipped eventually, but for every
solution $\boldsymbol{a}$, any vector $\boldsymbol{b}$ that differs
from $\boldsymbol{a}$ in exactly one non-fictive variable is no solution.
If $\boldsymbol{s}$ and $\boldsymbol{t}$ differ only in fictive
variables, $\boldsymbol{t}$ can be reached from $\boldsymbol{s}$
in $|\boldsymbol{s}-\boldsymbol{t}|$ steps by flipping one by one
the variables in which they differ.

Since $\{x\oplus y,1\}$ is a base of $\mathsf{L}$, every $B$-circuit
$\mathcal{C}$ can be transformed in polynomial time into an equivalent
$\{x\oplus y,1\}$-circuit $\mathcal{C}'$ by replacing each gate
of $\mathcal{C}$ with an equivalent $\{x\oplus y,1\}$-circuit. Now
one can decide in polynomial time whether a variable $x_{i}$ is fictive
by checking for $\mathcal{C}'$ whether the number of ``backward
paths'' from the output gate to gates labeled with $x_{i}$ is odd,
so \noun{st-Circ-Conn(}$B$) is in P.

$G(\mathcal{C})$ is connected iff at most one variable is non-fictive,
thus \noun{Circ-Conn(}$B$) is in P.

By Proposition \ref{pro:trf}, \noun{st-BF-Conn(}$B$) and \noun{BF-Conn(}$B$)
are in P also.
\end{proof}
This completes the proof of the easy side of the dichotomy.

\subsection{\label{sub:The-hard-cases}The Hard Side of the Dichotomy}
\begin{prop}
\noun{st-Circ-Conn(}$B$\emph{)} and \noun{Circ-Conn(}$B$\emph{)},
as well as \noun{st-BF-Conn(}$B$\emph{)} and \noun{BF-Conn(}$B$\emph{)},
are in $\mathrm{PSPACE}$ for any finite set \noun{$B$} of Boolean
functions\noun{.}\end{prop}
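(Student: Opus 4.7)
The plan is to adapt the argument sketched earlier in the excerpt for \noun{st-Conn}$(S_{3})$ and \noun{Conn}$(S_{3})$. The central observation is that a $B$-circuit $\mathcal{C}$ can be evaluated on any input in polynomial time, hence in polynomial space, and by Proposition~\ref{pro:trf} any $B$-formula can first be rewritten as an equivalent $B$-circuit in polynomial time. It therefore suffices to handle the two circuit problems; the corresponding formula problems then follow immediately.

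For \noun{st-Circ-Conn}$(B)$ I would describe a nondeterministic polynomial-space procedure that maintains a current vertex $\boldsymbol{u}\in\{0,1\}^{n}$, initialized to $\boldsymbol{s}$, together with a step counter. At each step the procedure nondeterministically guesses an index $i\in\{1,\ldots,n\}$, flips the $i$-th bit of $\boldsymbol{u}$, evaluates $\mathcal{C}$ on the resulting vector, and continues only if the new vector is still a solution; it accepts as soon as $\boldsymbol{u}=\boldsymbol{t}$, and rejects once the counter exceeds $2^{n}$. Storing $\boldsymbol{u}$, the counter, and the workspace needed to evaluate $\mathcal{C}$ takes polynomial space. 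Since $G(\mathcal{C})$ is a subgraph of the $n$-dimensional hypercube and therefore has at most $2^{n}$ vertices, the bound $2^{n}$ on path length is large enough to witness any existing $\boldsymbol{s}$-$\boldsymbol{t}$-path. Hence the problem lies in $\mathrm{NPSPACE}$, which equals $\mathrm{PSPACE}$ by Savitch's theorem.

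For \noun{Circ-Conn}$(B)$ I would iterate, reusing the same polynomial work-tape, over all ordered pairs $(\boldsymbol{u},\boldsymbol{v})\in\{0,1\}^{n}\times\{0,1\}^{n}$: evaluate $\mathcal{C}$ to check whether both are solutions, and if so invoke the st-connectivity subroutine above to test whether they are connected in $G(\mathcal{C})$. The procedure rejects as soon as a pair of solutions turns out to be disconnected, and accepts otherwise. Since the nondeterministic subroutine can be simulated deterministically with only a polynomial blow-up in space, and since the enumeration uses only $2n$ bits of storage at a time, the whole procedure runs in $\mathrm{PSPACE}$.

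Finally, for \noun{st-BF-Conn}$(B)$ and \noun{BF-Conn}$(B)$ I would apply Proposition~\ref{pro:trf} to the input $B$-formula in polynomial time and then run the corresponding circuit-based algorithm. There is no real obstacle here; the only mildly delicate point is to make sure that the $2^{n}$ cutoff on guessed path length really does cover every reachable solution, which is immediate from the hypercube bound on the number of vertices of $G(\mathcal{C})$.
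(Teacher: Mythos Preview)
Your argument is correct and is essentially the same as the paper's: the paper simply says the result follows as in the PSPACE upper bound of Lemma~\ref{lem:-gopa}, namely guess a path of length at most $2^{n}$ for $st$-connectivity (placing it in $\mathrm{NPSPACE}=\mathrm{PSPACE}$ by Savitch), and for connectivity iterate over all pairs of solutions reusing space. Your use of Proposition~\ref{pro:trf} to reduce the formula versions to the circuit versions is a minor explicit addition, but the core approach is identical.
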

\begin{proof}
This follows as in Lemma 3.6 of \citep{gop} (see \prettyref{lem:-gopa}).
\end{proof}
An inspection of Post's lattice shows that if $B\nsubseteq\mathsf{M}$,
$B\nsubseteq\mathsf{L}$, and $B\nsubseteq\mathsf{S}_{0}$, then $[B]\supseteq\mathsf{S}_{12}$,
$[B]\supseteq\mathsf{D}_{1}$, or $[B]\supseteq\mathsf{S}_{02}^{k}\,\forall k\geq2$,
so we have to prove $\mathrm{PSPACE}$-completeness and show the existence
of $B$-formulas with an exponential diameter in these cases.

In the proofs, we will use the following notation: We write $\boldsymbol{x}=\boldsymbol{c}$
or $\boldsymbol{x}=c_{1}\cdots c_{n}$ for $(x_{1}=c_{1})\wedge\cdots\wedge(x_{n}=c_{n})$,
where $\boldsymbol{c}=(c_{1},\ldots,c_{n})$ is a vector of constants;
e.g., $\boldsymbol{x}=\boldsymbol{0}$ means $\overline{x}_{1}\wedge\cdots\wedge\overline{x}_{n}$,
and $\boldsymbol{x}=101$ means $x_{1}\wedge\overline{x}_{2}\wedge x_{3}$.
Further, we use $\boldsymbol{x}\in\left\{ \boldsymbol{a},\boldsymbol{b},\ldots\right\} $
for $(\boldsymbol{x}=\boldsymbol{a})\vee(\boldsymbol{x}=\boldsymbol{b})\vee\ldots$.
Also, we write $\psi(\boldsymbol{\overline{x}})$ for $\psi(\overline{x}_{1},\ldots,\overline{x}_{n})$.
If we have two vectors of Boolean values $\boldsymbol{a}$ and $\boldsymbol{b}$
of length $n$ and $m$ resp., we write $\boldsymbol{a}\cdot\boldsymbol{b}$
for their concatenation $(a_{1},\ldots,a_{n},b_{1},\ldots b_{m})$.

All hardness proofs are by reductions from the problems for 1-reproducing
3-CNF-formulas, which are $\mathrm{PSPACE}$-complete by the following
proposition.
\begin{prop}
For 1-reproducing 3-CNF-formulas, the problems \noun{st-Conn }and
\noun{Conn} are $\mathrm{PSPACE}$-complete.\end{prop}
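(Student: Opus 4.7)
The plan is to establish the PSPACE upper bound exactly as after Lemma~\ref{lem:-gopa}, and to obtain PSPACE-hardness by composing the Gopalan et al.\ reduction underlying Lemma~\ref{lem:-gopa} with a \emph{polarity shift}: a syntactic transformation that converts any 3-CNF formula having a known satisfying assignment $\boldsymbol{c}$ into an equivalent 1-reproducing 3-CNF formula whose solution graph is isomorphic to the original.

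For $\boldsymbol{c}\in\{0,1\}^{n}$ I define $\phi_{\boldsymbol{c}}$ to be the 3-CNF formula obtained from $\phi(x_{1},\ldots,x_{n})$ by swapping positive and negative occurrences of every variable $x_{i}$ with $c_{i}=0$. Writing $\sigma(\boldsymbol{a}):=\boldsymbol{a}\oplus\overline{\boldsymbol{c}}$, a direct check gives $\phi_{\boldsymbol{c}}(\boldsymbol{a})=\phi(\sigma(\boldsymbol{a}))$. Since $\sigma$ is a fixed bit-flip, it is a Hamming-distance-preserving involution on $\{0,1\}^{n}$, so $G(\phi_{\boldsymbol{c}})\cong G(\phi)$ via $\sigma$. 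Moreover $\phi_{\boldsymbol{c}}(\boldsymbol{1})=\phi(\sigma(\boldsymbol{1}))=\phi(\boldsymbol{c})$, so $\phi_{\boldsymbol{c}}$ is 1-reproducing whenever $\boldsymbol{c}$ satisfies $\phi$.

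For hardness I invoke the reduction sketched after Lemma~\ref{lem:-gopa}, which, given an instance $(M,w)$ of a PSPACE-complete TM problem, outputs in polynomial time a 3-CNF formula $\phi$ together with two explicit solutions $\boldsymbol{s},\boldsymbol{t}$ such that $\boldsymbol{s}$ and $\boldsymbol{t}$ lie in the same component of $G(\phi)$ iff $M$ accepts $w$, and every other solution of $\phi$ is connected in $G(\phi)$ to one of them. Applying the polarity shift at $\boldsymbol{c}=\boldsymbol{s}$ produces a 1-reproducing 3-CNF formula $\phi':=\phi_{\boldsymbol{s}}$ with distinguished solutions $\sigma(\boldsymbol{s})=\boldsymbol{1}$ and $\sigma(\boldsymbol{t})$. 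The isomorphism $\sigma$ carries the connectivity structure across, so this single reduction proves PSPACE-hardness simultaneously of \noun{st-Conn} (on input $(\phi',\boldsymbol{1},\sigma(\boldsymbol{t}))$) and of \noun{Conn} (on input $\phi'$).

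I expect no serious obstacle: the only nontrivial ingredient, the polynomial-time availability of $\boldsymbol{s},\boldsymbol{t}$ alongside $\phi$, is already provided by Gopalan et al.; everything else is a routine verification that swapping literal signs is a 3-CNF-preserving operation inducing the correct cube automorphism.
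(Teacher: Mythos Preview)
Your proposal is correct and is essentially the same argument as the paper's: the paper also uses the explicit solution $\boldsymbol{s}$ from the Gopalan et al.\ reduction and applies the substitution $\psi(\boldsymbol{x})=\phi(x_{1}\oplus s_{1}\oplus1,\ldots,x_{n}\oplus s_{n}\oplus1)$, which is exactly your polarity shift $\phi_{\boldsymbol{s}}$ with $\sigma(\boldsymbol{a})=\boldsymbol{a}\oplus\overline{\boldsymbol{s}}$. Your write-up is more explicit about why this preserves 3-CNF form and induces a graph isomorphism, but the idea is identical.
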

\begin{proof}
In the $\mathrm{PSPACE}$-hardness proof for CNF($S_{3}$)-formulas
(Lemma 3.6 of \citep{gop}, see \prettyref{lem:-gopa}), two satisfying
assignments $\boldsymbol{s}$ and $\boldsymbol{t}$ to the constructed
formula $\phi$ are known, so we can construct a connectivity-equivalent
1-reproducing 3-CNF-formula $\psi$, e.g. as $\psi(\boldsymbol{x})=\phi(x_{1}\oplus s_{1}\oplus1,\ldots,x_{n}\oplus s_{n}\oplus1)$,
and then check connectivity for $\psi$ instead of $\phi$.\qed\end{proof}
\begin{lem}
\label{lem:s12}If $[B]\supseteq\mathsf{S}_{12}$,\end{lem}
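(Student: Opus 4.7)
The plan is to show PSPACE-completeness of all four connectivity problems and the existence of $[B]$-functions with exponential diameter by a polynomial-time reduction from the 1-reproducing 3-CNF problems, which are PSPACE-complete by the preceding proposition. Given a 1-reproducing 3-CNF formula $\phi(x_1,\ldots,x_n)$, I would build a $B$-formula $\psi(\mathbf{x},g)$ computing the function $g\wedge\phi(\mathbf{x})$. This function is 0-reproducing, 1-reproducing (because $\phi(\mathbf{1})=1$) and 1-separating via the coordinate $g$, hence lies in $\mathsf{S}_{12}\subseteq[B]$. Its satisfying assignments are exactly the pairs $(\mathbf{a},1)$ with $\phi(\mathbf{a})=1$, and two such assignments are adjacent in $G(\psi)$ iff their $\mathbf{x}$-parts are adjacent in $G(\phi)$, giving an isomorphism $G(\psi)\cong G(\phi)$ that transports both (st-)connectivity questions and the diameter verbatim.

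The core step is a polynomial-size realization of $g\wedge\phi$ as a $B$-formula. Since $[B]\supseteq\mathsf{S}_{12}$, the base function $f(x,y,z)=x\wedge(y\vee\neg z)$ is representable by a constant-size $B$-formula, so it suffices to build a polynomial $\{f\}$-formula. I would rely on three primitives: the identity $f(x,y,x)=x\wedge y$ gives binary conjunction; the identity
\begin{equation*}
f(x,z,f(x,z,y)) \;=\; x\wedge(y\vee z),
\end{equation*}
which follows from the absorption $z\vee(\neg z\wedge y)=y\vee z$, lets me add positive literals to an $x$-anchored disjunction at constant cost; and the identity $f(a,b,g)=a\wedge b$, valid whenever $a\le g$ pointwise, lets me AND two $g$-anchored functions without reusing either of them. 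Iterating the second identity and combining with the first (to insert negated literals directly as the third argument of $f$) turns each clause $C_j$ of $\phi$ into a constant-size $\{f\}$-formula $\psi_j$ computing $g\wedge C_j$; here I use that $\phi$ is 1-reproducing, so every $C_j$ has at least one positive literal to serve as the initial anchor. The $\psi_j$'s are then combined using the third identity, arranged as a balanced binary AND-tree of depth $O(\log m)$.

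For the exponential-diameter claim I would apply the very same construction to a 1-reproducing 3-CNF formula whose defined function is the one of Lemma~\ref{lem:gop}; the isomorphism $G(\psi)\cong G(\phi)$ then hands over the $2^{\lfloor n/2\rfloor}$ lower bound, with the $+1$ for the extra variable $g$ absorbed into the $\Omega$. PSPACE-inclusion is already known from the preceding proposition, and Proposition~\ref{pro:trf} plus $[B]\supseteq\mathsf{S}_{12}$ transports hardness between the formula and the circuit versions.

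The main obstacle is the size bookkeeping when $B$ is strictly larger than $\{x\wedge(y\vee\neg z)\}$: the $B$-formula for $f$ may use its arguments with multiplicity $k>1$, so naively substituting it into a depth-$d$ $\{f\}$-formula can inflate the formula size by $k^d$. This is exactly why I would insist on the balanced AND-tree: together with the constant-depth per-clause gadgets, the total depth is $O(\log m)$, the blow-up stays at $m^{O(\log k)}$, and the whole reduction remains polynomial-time for every fixed $B$.
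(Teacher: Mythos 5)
Your proposal is correct and follows essentially the same route as the paper: a reduction from the connectivity problems for 1-reproducing 3-CNF formulas via the transformation $\phi\mapsto\phi\wedge y$ (which lands in $\mathsf{S}_{12}=\mathsf{S}_1\cap\mathsf{R}_2$ and yields an isomorphic solution graph on the slice $y=1$), with the polynomial size bound secured by a balanced conjunction tree of logarithmic depth. The only cosmetic difference is that you instantiate the constant-size gadgets explicitly from the base $x\wedge(y\vee\neg z)$, whereas the paper argues abstractly that the bounded-arity functions $T_{C_i}$ and $T_{u\wedge v}$ lie in $[B]$ and verifies $T_\wedge(T_{\psi_1},T_{\psi_2})\equiv T_{\psi_1\wedge\psi_2}$; both yield the same conclusion.
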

\begin{enumerate}
\item \noun{st-BF-Conn(}$B$\emph{)} and \noun{BF-Conn(}$B$\emph{)} are
$\mathrm{PSPACE}$-complete,

\begin{enumerate}
\item \noun{st-Circ-Conn(}$B$\emph{)} and \noun{Circ-Conn(}$B$\emph{)}
are $\mathrm{PSPACE}$-complete,
\item for $n\geq3$, there is an $n$-ary function $f\in[B]$ with diameter
of at least $2^{\left\lfloor \frac{n-1}{2}\right\rfloor }$.
\end{enumerate}
\end{enumerate}
\begin{proof}
1. We reduce the problems for 1-reproducing 3-CNF-formulas to the
ones for $B$-formulas: We map a 1-reproducing 3-CNF-formula $\phi$
and two solutions $\boldsymbol{s}$ and $\boldsymbol{t}$ of $\phi$
to a $B$-formula $\phi'$ and two solutions $\boldsymbol{s'}$ and
$\boldsymbol{t'}$ of $\phi'$ such that $\boldsymbol{s'}$ and $\boldsymbol{t'}$
are connected in $G(\phi')$ iff $\boldsymbol{s}$ and $\boldsymbol{t}$
are connected in $G(\phi)$, and such that $G(\phi')$ is connected
iff $G(\phi)$ is connected.

While the construction of $\phi'$ is quite easy for this lemma, the
construction for the next two lemmas is analogous but more intricate,
so we proceed carefully in two steps, which we will adapt in the next
two proofs: In the first step, we give a transformation $T$ that
transforms any 1-reproducing formula $\psi$ into a connectivity-equivalent
formula $T_{\psi}\in\mathsf{S}_{12}$ built from the standard connectives.
Since $\mathsf{S}_{12}\subseteq[B]$, we can express $T_{\psi}$ as
a $B$-formula $T_{\psi}^{*}$. Now if we would apply $T$ to $\phi$
directly, we would know that $T_{\phi}$ can be expressed as a $B$-formula.
However, this could lead to an exponential increase in the formula
size (see \prettyref{sec:Cir}), so we have to show how to construct
the $B$-formula in polynomial time. For this, in the second step,
we construct a $B$-formula $\phi'$ directly from $\phi$ (by applying
$T$ to the clauses and the $\wedge$'s individually), and then show
that $\phi'$ is equivalent to $T_{\phi}$; thus we know that $\phi'$
is connectivity-equivalent to $\phi$.

\emph{Step 1.} From Table \ref{tab:List-of-all}, we find that $\mathsf{S}_{12}=\mathsf{S}_{1}\cap\mathsf{R}_{2}=\mathsf{S}_{1}\cap\mathsf{R}_{0}\cap\mathsf{R}_{1}$,
so we have to make sure that $T_{\psi}$ is 1-seperating, 0-reproducing,
and 1-reproducing. Let 
\[
T_{\psi}=\psi\wedge y,
\]
where $y$ is a new variable.

All solutions $\boldsymbol{a}$ of $T_{\psi}(\boldsymbol{x},y)$ have
$a_{n+1}=1$, so $T_{\psi}$ is 1-seperating and 0-reproducing; also,
$T_{\psi}$ is still 1-reproducing. Further, for any two solutions
$\boldsymbol{s}$ and $\boldsymbol{t}$ of $\psi(\boldsymbol{x})$,
$\boldsymbol{s}'=\boldsymbol{s}\cdot1$ and $\boldsymbol{t}'=\boldsymbol{t}\cdot1$
are solutions of $T_{\psi}(\boldsymbol{x},y)$, and it is easy to
see that they are connected in $G(T_{\psi})$ iff $\boldsymbol{s}$
and $\boldsymbol{t}$ are connected in $G(\psi)$, and that $G(T_{\psi})$
is connected iff $G(\psi)$ is connected.

\emph{Step 2.} The idea is to parenthesize the conjunctions of $\phi$
such that we get a tree of $\wedge$'s of depth logarithmic in the
size of $\phi$, and then to replace each clause and each $\wedge$
with an equivalent $B$-formula. This can increase the formula size
by only a polynomial in the original size even if the $B$-formula
equivalent to $\wedge$ uses some input variable more than once.

Let $\phi=C_{1}\wedge\cdots\wedge C_{n}$ be a 1-reproducing 3-CNF-formula.
Since $\phi$ is 1-reproducing, every clause $C_{i}$ of $\phi$ is
itself 1-reproducing, and we can express $T_{C_{i}}$ through a $B$-formula
$T_{C_{i}}^{*}$. Also, we can express $T_{u\wedge v}$ through a
$B$-formula $T_{u\wedge v}^{*}$ since $\wedge$ is 1-reproducing;
we write $T_{\wedge}(\psi_{1},\psi_{2})$ for the formula obtained
from $T_{u\wedge v}$ by substituting the formula $\psi_{1}$ for
$u$ and $\psi_{2}$ for $v$, and similarly write $T_{\wedge}^{*}(\psi_{1},\psi_{2})$
for the formula obtained from $T_{u\wedge v}^{*}$ in this way. We
let $\phi'=$\noun{Tr$(\phi)$}, where \noun{Tr} is the following
recursive algorithm that takes a CNF-formula as input:\\
\\
Algorithm \noun{Tr}$\left(\psi_{1}\wedge\cdots\wedge\psi_{m}\right)$\end{proof}
\begin{itemize}[label= ]
\item If $m=1$, return $T_{\psi_{1}}^{*}$.
\item Else return $T_{\wedge}^{*}\left(\textsc{Tr}(\psi_{1}\wedge\cdots\wedge\psi_{\left\lfloor m/2\right\rfloor }),\textsc{Tr}(\psi_{\left\lfloor m/2\right\rfloor +1}\wedge\cdots\wedge\psi_{m})\right)$.\end{itemize}
\begin{proof}
Since the recursion terminates after a number of steps logarithmic
in the number of clauses of $\phi$, and every step increases the
total formula size by only a constant factor, the algorithm runs in
polynomial time. We show $\phi'\equiv T_{\phi}$ by induction on $m$.
For $m=1$ this is clear. For the induction step, we have to show
$T_{\wedge}^{*}(T_{\psi_{1}},T_{\psi_{2}})\equiv T_{\psi_{1}\wedge\psi_{2}}$,
but since $T_{\wedge}(\psi_{1},\psi_{2})\equiv T_{\wedge}^{*}(\psi_{1},\psi_{2})$,
it suffices to show that $T_{\wedge}(T_{\psi_{1}},T_{\psi_{2}})\equiv T_{\psi_{1}\wedge\psi_{2}}$:

\[
T_{\wedge}(T_{\psi_{1}},T_{\psi_{2}})=(\psi_{1}\wedge y)\wedge(\psi_{2}\wedge y)\wedge y\equiv\psi_{1}\wedge\psi_{2}\wedge y=T_{\psi_{1}\wedge\psi_{2}}.
\]

2. This follows from 1.\ by Proposition \ref{pro:trf}.

3. By \prettyref{lem:gop}, there is an 1-reproducing $(n-1)$-ary
function $f$ with diameter of at least $2^{\left\lfloor \frac{n-1}{2}\right\rfloor }$.
Let $f$ be represented by a formula $\phi$; then, $T_{\phi}$ represents
an $n$-ary function of the same diameter in $\mathsf{S}_{12}$.\end{proof}
\begin{lem}
If $[B]\supseteq\mathsf{D}_{1}$,\end{lem}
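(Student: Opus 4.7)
The plan is to follow the blueprint of Lemma \ref{lem:s12}: reduce \noun{st-Conn} and \noun{Conn} on 1-reproducing 3-CNF-formulas to the analogous problems on $B$-formulas by defining a transformation $T$ that sends a 1-reproducing formula $\psi$ to a connectivity-equivalent formula $T_{\psi}\in\mathsf{D}_{1}$. Since $\mathsf{D}_{1}\subseteq[B]$, both $T_{\psi}$ and the corresponding $T_{\wedge}$ gadget can be rewritten as $B$-formulas of constant size, and applying $T$ recursively on subformulas via the algorithm \textsc{Tr} will turn a 1-reproducing 3-CNF-formula $\phi$ into a polynomial-size $B$-formula $\phi'$ equivalent to $T_{\phi}$. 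The circuit versions of the hardness claims then follow from Proposition \ref{pro:trf}, and the exponential-diameter statement by applying $T$ to the $(n{-}1)$-ary 1-reproducing witness of Lemma \ref{lem:gop}.

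The natural candidate for the core transformation introduces one fresh variable $y$ and sets
\[
T_{\psi}(\boldsymbol{x},y)\;=\;\bigl(\psi(\boldsymbol{x})\wedge y\bigr)\;\vee\;\bigl(\overline{\psi(\overline{\boldsymbol{x}})}\wedge\overline{y}\bigr).
\]
A short case analysis verifies $T_{\psi}(\overline{\boldsymbol{x}},\overline{y})=\overline{T_{\psi}(\boldsymbol{x},y)}$, $T_{\psi}(\boldsymbol{1},1)=\psi(\boldsymbol{1})=1$ and $T_{\psi}(\boldsymbol{0},0)=\overline{\psi(\boldsymbol{1})}=0$, so $T_{\psi}\in\mathsf{D}\cap\mathsf{R}_{2}=\mathsf{D}_{1}$. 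Since $\wedge$ is itself 1-reproducing, the same recipe applied to $u\wedge v$ yields the three-variable gadget $T_{\wedge}\in\mathsf{D}_{1}$; distributive expansion then gives the recursion-enabling identity $T_{\wedge}(T_{\psi_{1}},T_{\psi_{2}})\equiv T_{\psi_{1}\wedge\psi_{2}}$, so the algorithm \textsc{Tr} from the proof of Lemma \ref{lem:s12} carries over verbatim and produces $\phi'$ in polynomial time.

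The main obstacle will be establishing connectivity-equivalence between $G(\psi)$ and $G(T_{\psi})$. The layer $y=1$ of $G(T_{\psi})$ is a faithful copy of $G(\psi)$, while the layer $y=0$ consists of those $\boldsymbol{a}$ with $\psi(\overline{\boldsymbol{a}})=0$, joined to the $y=1$ layer by a cross-edge at every $\boldsymbol{a}$ for which $\psi(\boldsymbol{a})=1$ and $\psi(\overline{\boldsymbol{a}})=0$. A priori these cross-edges, together with a connected subgraph of non-solutions of $\psi$, could merge components of $G(\psi)$ inside $G(T_{\psi})$ (and indeed a naive 4-variable test case already shows that the bare construction above need not preserve connectivity). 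To rule this out I would enrich the construction with one or more additional auxiliary variables and a small gadget that ``gates'' every cross-layer detour through a single controlled vertex such as the all-ones vector, so that any detour must re-enter the $y=1$ layer at the same gateway and cannot merge previously disconnected components. With such a gate in place the embedding $G(\psi)\hookrightarrow G(T_{\psi})$ becomes isometric, yielding the exponential-diameter claim directly from Lemma \ref{lem:gop}; soundness of the reduction for both \noun{st-Conn} and \noun{Conn} then gives the three PSPACE-completeness statements.
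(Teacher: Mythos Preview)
Your outline matches the paper's: reduce from 1-reproducing 3-CNF via a transformation $T$ into $\mathsf{D}_{1}$, then push it through the recursive algorithm \textsc{Tr}. You also correctly diagnose that the one-variable transformation
\[
T_{\psi}(\boldsymbol{x},y)=\bigl(\psi(\boldsymbol{x})\wedge y\bigr)\vee\bigl(\overline{\psi(\overline{\boldsymbol{x}})}\wedge\overline{y}\bigr)
\]
is in $\mathsf{D}_{1}$ but does \emph{not} preserve connectivity (e.g.\ for $\psi$ with solution set $\{111,100,001\}$ the three isolated solutions of $\psi$ become connected in $G(T_{\psi})$ through the $y=0$ layer).

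The gap is that you stop exactly at the hard step. Saying you ``would enrich the construction with one or more additional auxiliary variables and a small gadget that gates every cross-layer detour through a single controlled vertex'' is a plan, not a proof. The paper's fix is not a one-line tweak: it uses \emph{three} auxiliary variables $\boldsymbol{y}=(y_{1},y_{2},y_{3})$, places the copy of $G(\psi)$ at $\boldsymbol{y}=111$ and the dual copy at $\boldsymbol{y}=000$, pads with all vectors having $\boldsymbol{y}\in\{100,010,001\}$ so that the two copies are at Hamming distance at least $2$ from each other, and then surgically adds $\boldsymbol{1}\cdot110$ and removes $\boldsymbol{0}\cdot001$ to merge the extra padding component with the $\boldsymbol{1}$-component while keeping self-duality. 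None of this is suggested by ``a gating gadget,'' and getting both self-duality and the connectivity equivalence to survive the modification simultaneously is where the actual work lies.

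Two knock-on issues: first, the identity $T_{\wedge}(T_{\psi_{1}},T_{\psi_{2}})\equiv T_{\psi_{1}\wedge\psi_{2}}$ that you check for the naive $T$ has to be re-established for whatever enriched $T$ you end up with, and in the paper this verification for the three-variable version is a page-long case analysis, not ``distributive expansion.'' Second, your diameter claim is internally inconsistent: you invoke the $(n{-}1)$-ary witness of Lemma~\ref{lem:gop} as if one extra variable sufficed, but you have already conceded that one variable is not enough. With three auxiliary variables the bound you can actually obtain is $2^{\lfloor (n-3)/2\rfloor}$, which is what the paper states.
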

\begin{enumerate}
\item \noun{st-BF-Conn(}$B$\emph{)} and \noun{BF-Conn(}$B$\emph{)} are
$\mathrm{PSPACE}$-complete,

\begin{enumerate}
\item \noun{st-Circ-Conn(}$B$\emph{)} and \noun{Circ-Conn(}$B$\emph{)}
are $\mathrm{PSPACE}$-complete,
\item for $n\geq5$, there is an $n$-ary function $f\in[B]$ with diameter
of at least $2^{\left\lfloor \frac{n-3}{2}\right\rfloor }$.
\end{enumerate}
\end{enumerate}
\begin{proof}
1.$\;$As noted, we adapt the two steps from the previous proof.

\emph{Step 1. }Since $\mathsf{D}_{1}=\mathsf{D}\cap\mathsf{R}_{0}\cap\mathsf{R}_{1}$,
$T_{\psi}$ must be self-dual, 0-reproducing, and 1-reproducing. For
clarity, we first construct an intermediate formula $T_{\psi}^{\sim}\in\mathsf{D}_{1}$
whose solution graph has an additional component, then we eliminate
that component.

For $\psi(\boldsymbol{x})$, let
\[
T_{\psi}^{\sim}=\left(\psi(\boldsymbol{x})\wedge(\boldsymbol{y}=\boldsymbol{1})\right)\vee\left(\overline{\psi(\boldsymbol{\overline{x}})}\wedge(\boldsymbol{y}=\boldsymbol{0})\right)\vee\left(\boldsymbol{y}\in\left\{ 100,010,001\right\} \right),
\]
where $\boldsymbol{y}=(y_{1},y_{2},y_{3})$ are three new variables.

$T_{\psi}^{\sim}$ is self-dual: for any solution ending with 111
(satisfying the first disjunct), the inverse vector is no solution;
similarly, for any solution ending with 000 (satisfying the second
disjunct), the inverse vector is no solution; finally, all vectors
ending with 100, 010, or 001 are solutions and their inverses are
no solutions. Also, $T_{\psi}^{\sim}$ is still 1-reproducing, and
it is 0-reproducing (for the second disjunct note that $\overline{\psi(\overline{0\cdots0})}\equiv\overline{\psi(1\cdots1)}\equiv0$).

Further, every solution $\boldsymbol{a}$ of $\psi$ corresponds to
a solution $\boldsymbol{a}\cdot111$ of $T_{\psi}^{\sim}$, and for
any two solutions $\boldsymbol{s}$ and $\boldsymbol{t}$ of $\psi$,
$\boldsymbol{s}'=\boldsymbol{s}\cdot111$ and $\boldsymbol{t}'=\boldsymbol{t}\cdot111$
are connected in $G(T_{\psi}^{\sim})$ iff $\boldsymbol{s}$ and $\boldsymbol{t}$
are connected in $G(\psi)$: The ``if'' is clear, for the ``only
if'' note that since there are no solutions of $T_{\psi}^{\sim}$
ending with 110, 101, or 011, every solution of $T_{\psi}^{\sim}$
not ending with 111 differs in at least two variables from the solutions
that do.

Observe that exactly one connected component is added in $G(T_{\psi}^{\sim})$
to the components corresponding to those of $G(\psi)$: It consists
of all solutions ending with 000, 100, 010, or 001 (any two vectors
ending with 000 are connected e.g. via those ending with 100). It
follows that $G(T_{\psi}^{\sim})$ is always unconnected. To fix this,
we modify $T_{\psi}^{\sim}$ to $T_{\psi}$ by adding $1\cdots1\cdot110$
as a solution, thereby connecting $1\cdots1\cdot111$ (which is always
a solution since $T_{\psi}^{\sim}$ is 1-reproducing) with $1\cdots1\cdot100$,
and thereby with the additional component of $T_{\psi}$. To keep
the function self-dual, we must in turn remove $0\cdots0\cdot001$,
which does not alter the connectivity. Formally,

\begin{eqnarray}
T_{\psi} & = & \left(T_{\psi}^{\sim}\vee\left((\boldsymbol{x}=\boldsymbol{1})\wedge(\boldsymbol{y}=110)\right)\right)\wedge\neg\left((\boldsymbol{x}=\boldsymbol{0})\wedge(\boldsymbol{y}=001)\right)\label{eq:q1}\\
 & = & \left(\psi(\boldsymbol{x})\wedge(\boldsymbol{y}=\boldsymbol{1})\right)\vee\left(\overline{\psi(\boldsymbol{\overline{x}})}\wedge(\boldsymbol{y}=\boldsymbol{0})\right)\nonumber \\
 &  & \vee\left(\boldsymbol{y}\in\left\{ 100,010,001\right\} \wedge\neg((\boldsymbol{x}=\boldsymbol{0})\wedge(\boldsymbol{y}=001))\right)\nonumber \\
 &  & \vee((\boldsymbol{x}=\boldsymbol{1})\wedge(\boldsymbol{y}=110)).\nonumber 
\end{eqnarray}

\begin{figure}[!h]
\begin{centering}
\includegraphics[scale=0.36]{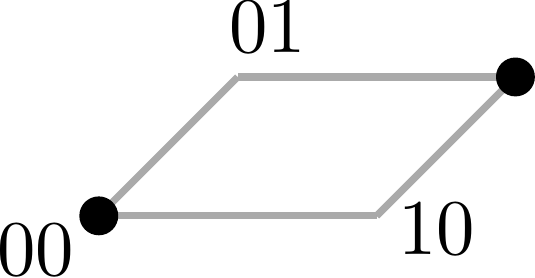}$\qquad$\includegraphics[scale=0.72]{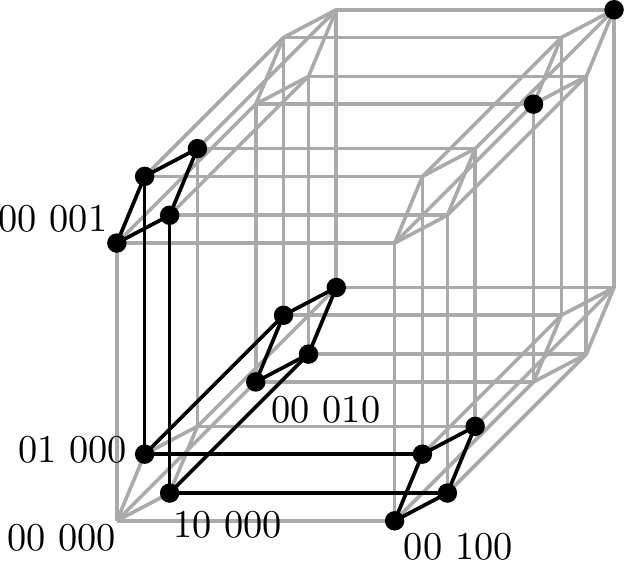}$\qquad$\includegraphics[scale=0.72]{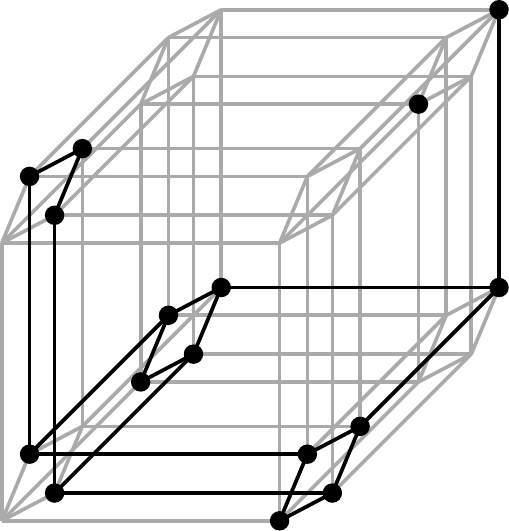}
\par\end{centering}

\protect\caption{An example for the transformation. Left: $\psi=\left(x_{1}\vee\overline{x_{2}}\right)\wedge\left(\overline{x_{1}}\vee x_{2}\right)$,
center: $T_{\psi}^{\sim}$, right: $T_{\psi}$. The \textquotedblleft axis
vertices\textquotedblright{} are labeled in the first two graphs.}
\end{figure}

Now $G(T_{\psi})$ is connected iff $G(\psi)$ is connected.

\emph{Step 2.} Again, we use the algorithm \noun{Tr} from the previous
proof to transform any 1-reproducing 3-CNF-formula $\phi$ into a
$B$-formula $\phi'$ equivalent to $T_{\phi}$, but with the definition
\prettyref{eq:q1} of $T$. Again, we have to show $T_{\wedge}(T_{\psi_{1}},T_{\psi_{2}})\equiv T_{\psi_{1}\wedge\psi_{2}}$.
Here, 
\begin{eqnarray*}
T_{\wedge}(T_{\psi_{1}},T_{\psi_{2}}) & = & \left(T_{\psi_{1}}\wedge T_{\psi_{2}}\wedge(\boldsymbol{y}=\boldsymbol{1})\right)\vee\left(\overline{\overline{T_{\psi_{1}}}\wedge\overline{T_{\psi_{2}}}}\wedge(\boldsymbol{y}=\boldsymbol{0})\right)\\
 &  & \vee\left(\boldsymbol{y}\in\left\{ 100,010,001\right\} \wedge\neg\left(\overline{T_{\psi_{1}}}\wedge\overline{T_{\psi_{2}}}\wedge(\boldsymbol{y}=001)\right)\right)\\
 &  & \vee\left(T_{\psi_{1}}\wedge T_{\psi_{2}}\wedge(\boldsymbol{y}=110)\right).
\end{eqnarray*}
We consider the parts of the formula in turn: For any formula $\xi$
we have $T_{\xi}(\boldsymbol{x}_{\xi})\wedge(\boldsymbol{y}=\boldsymbol{1})\equiv\xi(\boldsymbol{x}_{\xi})\wedge(\boldsymbol{y}=\boldsymbol{1})$
and $T_{\xi}(\boldsymbol{x}_{\xi})\wedge(\boldsymbol{y}=\boldsymbol{0})\equiv\overline{\psi(\overline{\boldsymbol{x}_{\xi}})}\wedge(\boldsymbol{y}=\boldsymbol{0})$,
where $\boldsymbol{x}_{\xi}$ denotes the variables of $\xi$. Using
$\overline{\overline{T_{\psi_{1}}(\boldsymbol{x}_{\psi_{1}})}\wedge\overline{T_{\psi_{2}}(\boldsymbol{x}_{\psi_{2}})}}\wedge(\boldsymbol{y}=\boldsymbol{0})=\left(T_{\psi_{1}}(\boldsymbol{x}_{\psi_{1}})\vee T_{\psi_{2}}(\boldsymbol{x}_{\psi_{2}})\right)\wedge(\boldsymbol{y}=\boldsymbol{0})$,
the first line becomes 
\[
\left(\psi_{1}(\boldsymbol{x}_{\psi_{1}})\wedge\psi_{2}(\boldsymbol{x}_{\psi_{2}})\wedge(\boldsymbol{y}=\boldsymbol{1})\right)\vee\left(\left(\overline{\psi_{1}(\overline{\boldsymbol{x}_{\psi_{1}}})\wedge\psi_{2}(\overline{\boldsymbol{x}_{\psi_{2}}})}\right)\wedge(\boldsymbol{y}=\boldsymbol{0})\right).
\]
For the second line, we observe 
\begin{eqnarray*}
\overline{T_{\psi}(\boldsymbol{x}_{\psi})} & \equiv & \left(\overline{\psi(\boldsymbol{x}_{\psi})}\vee\neg(\boldsymbol{y}=\boldsymbol{1})\right)\wedge\left(\psi(\boldsymbol{\overline{x}_{\psi}})\vee\neg(\boldsymbol{y}=\boldsymbol{0})\right)\\
 &  & \wedge\left(\boldsymbol{y}\notin\left\{ 100,010,001\right\} \vee\left((\boldsymbol{x}_{\psi}=\boldsymbol{0})\wedge(\boldsymbol{y}=001)\right)\right)\\
 &  & \wedge(\neg(\boldsymbol{x}_{\psi}=\boldsymbol{1})\vee\overline{(\boldsymbol{y}=110)}),
\end{eqnarray*}
thus $\overline{T_{\psi}(\boldsymbol{x}_{\psi})}\wedge(\boldsymbol{y}=001)\equiv(\boldsymbol{x}_{\psi}=\boldsymbol{0})\wedge(\boldsymbol{y}=001)$,
and the second line becomes
\[
\vee\left(\boldsymbol{y}\in\left\{ 100,010,001\right\} \wedge\neg\left((\boldsymbol{x}_{\psi_{1}}=\boldsymbol{0})\wedge(\boldsymbol{x}_{\psi_{2}}=\boldsymbol{0})\wedge(\boldsymbol{y}=001)\right)\right).
\]

Since $T_{\psi}(\boldsymbol{x}_{\psi})\wedge(\boldsymbol{y}=110)\equiv(\boldsymbol{x}_{\psi}=\boldsymbol{1})\wedge(\boldsymbol{y}=110)$
for any $\psi$, the third line becomes
\[
\vee\left((\boldsymbol{x}_{\psi_{1}}=\boldsymbol{1})\wedge(\boldsymbol{x}_{\psi_{2}}=\boldsymbol{1})\wedge(\boldsymbol{y}=110)\right).
\]
Now $T_{\wedge}(T_{\psi_{1}},T_{\psi_{2}})$ equals
\begin{eqnarray*}
T_{\psi_{1}\wedge\psi_{2}} & = & \left(\psi_{1}(\boldsymbol{x}_{\psi_{1}})\wedge\psi_{2}(\boldsymbol{x}_{\psi_{2}})\wedge(\boldsymbol{y}=\boldsymbol{1})\right)\vee\left(\overline{\psi_{1}(\overline{\boldsymbol{x}_{\psi_{1}}})\wedge\psi_{2}(\overline{\boldsymbol{x}_{\psi_{2}}})}\wedge(\boldsymbol{y}=\boldsymbol{0})\right)\\
 &  & \vee\left(\boldsymbol{y}\in\left\{ 100,010,001\right\} \wedge\neg\left((\boldsymbol{x}_{\psi_{1}}=\boldsymbol{0})\wedge(\boldsymbol{x}_{\psi_{2}}=\boldsymbol{0})\wedge(\boldsymbol{y}=001)\right)\right)\\
 &  & \vee\left((\boldsymbol{x}_{\psi_{1}}=\boldsymbol{1})\wedge(\boldsymbol{x}_{\psi_{2}}=\boldsymbol{1})\wedge(\boldsymbol{y}=110)\right).
\end{eqnarray*}

2. This follows from 1.\ by Proposition \ref{pro:trf}.

3. By \prettyref{lem:gop} there is an 1-reproducing $(n-3)$-ary
function $f$ with diameter of at least $2^{\left\lfloor \frac{n-3}{2}\right\rfloor }$.
Let $f$ be represented by a formula $\phi$; then, $T_{\phi}$ represents
an $n$-ary function of the same diameter in $\mathsf{D}_{1}$.\end{proof}
\begin{lem}
If $[B]\supseteq\mathsf{S}_{02}^{k}$ for any $k\geq2$,\end{lem}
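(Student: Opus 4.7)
The plan is to follow the two-step template that was used in the preceding two lemmas. Step~1 is to construct a transformation $T$ sending any $1$-reproducing $3$-CNF-formula $\psi(\boldsymbol{x})$ to a connectivity-equivalent formula $T_{\psi}\in\mathsf{S}_{02}^{k}\subseteq[B]$, and step~2 is to apply the algorithm \noun{Tr} from the proof of \prettyref{lem:s12} with this new $T$ in order to realise $T_{\phi}$ as a polynomial-size $B$-formula $\phi'$, the correctness reducing as before to the single identity $T_{\wedge}(T_{\psi_{1}},T_{\psi_{2}})\equiv T_{\psi_{1}\wedge\psi_{2}}$. Once step~1 is in hand, parts~2 and~3 of the lemma follow immediately: part~2 by \prettyref{pro:trf}, and part~3 by applying $T$ to the exponential-diameter function from \prettyref{lem:gop}, so that the resulting function in $\mathsf{S}_{02}^{k}$ has the same diameter with only a bounded number of additional auxiliary variables.

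The target clone is $\mathsf{S}_{02}^{k}=\mathsf{S}_{0}^{k}\cap\mathsf{R}_{2}$, so $T_{\psi}$ must simultaneously be $0$-reproducing, $1$-reproducing, and $0$-separating of degree $k$. The natural way to enforce the last property is to append $k+1$ fresh auxiliary variables $\boldsymbol{y}=(y_{1},\ldots,y_{k+1})$ and to engineer $T_{\psi}$ so that every non-solution has at most one $y_{i}$ equal to $1$; then among any $k$ non-solutions the one-positions in $\boldsymbol{y}$ can cover at most $k$ of the $k+1$ coordinates, so by pigeonhole some $y_{i}$ is $0$ in all of them, which certifies $0$-separation of degree $k$. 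Guided by the $\mathsf{D}_{1}$ construction, I expect $T_{\psi}$ to take a shape like $(\psi(\boldsymbol{x})\wedge P(\boldsymbol{y}))\vee Q(\boldsymbol{x},\boldsymbol{y})$, where $P$ pins down a distinguished ``anchor'' pattern on $\boldsymbol{y}$ so that the $\psi$-components survive, and $Q$ is a small piece anchored to $\boldsymbol{x}=\boldsymbol{1}$ (always a solution of $\psi$ by $1$-reproducibility) that absorbs the remaining non-solutions and simultaneously enforces $0$-separation and $0$-reproducibility.

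The main obstacle is precisely this reconciliation. Every naive choice collapses: for instance $T_{\psi}=\psi(\boldsymbol{x})\vee T_{2}^{k+1}(\boldsymbol{y})$ is $0$-separating of degree $k$ but fails $0$-reproducibility as soon as $\psi(\boldsymbol{0})=1$, while patching it by declaring all high-weight-$\boldsymbol{y}$ vectors to be solutions turns $G(T_{\psi})$ into a single connected component regardless of the structure of $G(\psi)$, because those vectors bridge every pair of $\boldsymbol{x}$-values. Following the $\mathsf{D}_{1}$ strategy, I would restrict the auxiliary solutions to the single slice $\boldsymbol{x}=\boldsymbol{1}$ and, if a spurious extra component then appears in $G(T_{\psi})$, eliminate it by toggling a single pair of boundary vectors in direct analogy with equation~\prettyref{eq:q1}. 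Checking simultaneously that the final $T_{\psi}$ lies in $\mathsf{S}_{02}^{k}$, is $0$-reproducing, and preserves both $\boldsymbol{s}$-$\boldsymbol{t}$-connectivity and global connectivity of $G(\psi)$ will be the delicate part of the argument.

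With $T_{\psi}$ in hand, step~2 is routine: \noun{Tr} organises the conjunctions of $\phi=C_{1}\wedge\cdots\wedge C_{n}$ into a balanced binary tree of logarithmic depth, replacing each clause by the constant-size $B$-formula $T_{C_{i}}^{*}$ and each $\wedge$-node by the constant-size $B$-formula $T_{\wedge}^{*}$, so that the final $\phi'$ has polynomial size. Induction on the number of clauses then reduces the equivalence $\phi'\equiv T_{\phi}$ to the identity $T_{\wedge}(T_{\psi_{1}},T_{\psi_{2}})\equiv T_{\psi_{1}\wedge\psi_{2}}$, which I would verify by a direct calculation in analogy with the corresponding expansion in the $\mathsf{D}_{1}$ proof.
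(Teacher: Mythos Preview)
Your two-step template, the pigeonhole intuition (force every non-solution to have at most one of the $k{+}1$ auxiliary bits set), and the derivation of parts~2 and~3 all match the paper. The gap is in your concrete proposal for $T_\psi$ in Step~1.

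You suggest $T_\psi=(\psi(\boldsymbol{x})\wedge P(\boldsymbol{y}))\vee Q(\boldsymbol{x},\boldsymbol{y})$ with the auxiliary part $Q$ confined to the slice $\boldsymbol{x}=\boldsymbol{1}$. This cannot be $0$-separating of degree $k$ once $\psi$ has at least $k$ variables: for each $\boldsymbol{a}\neq\boldsymbol{1}$ the vector $(\boldsymbol{a},\boldsymbol{1})$ is a non-solution (it misses the anchor $P$, and it lies outside the slice so $Q$ does not cover it), and $k$ such vectors whose single zero lies in pairwise distinct $\boldsymbol{x}$-coordinates share no common $0$ anywhere. So restricting $Q$ to $\boldsymbol{x}=\boldsymbol{1}$---introduced precisely to stop the auxiliary solutions from bridging different $\boldsymbol{x}$-values---destroys the very property you need.

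The paper resolves the tension the other way round. Writing $\boldsymbol{z}=(z_1,\dots,z_{k+1})$ for the auxiliaries and adding one further fresh bit $y$, it sets
\[
T_\psi^{\sim}=\bigl(\psi(\boldsymbol{x})\wedge y\wedge(\boldsymbol{z}=\boldsymbol{0})\bigr)\ \vee\ (|\boldsymbol{z}|>1),
\]
so the absorbing disjunct is \emph{independent of $\boldsymbol{x}$}. This does \emph{not} collapse the $\psi$-components: because $\psi$ is anchored at $\boldsymbol{z}=\boldsymbol{0}$, the entire layer $|\boldsymbol{z}|=1$ is empty of solutions, leaving a Hamming gap of width $2$ between the $\psi$-copies and the $|\boldsymbol{z}|>1$ block. (Your bridging objection was about $\psi\vee T_2^{k+1}(\boldsymbol{y})$, where $\psi$ is \emph{not} anchored; once it is, the objection disappears.) The extra bit $y$ is what secures $0$-reproducibility. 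One then adds the single bridge vector $(\boldsymbol{1},1,10\cdots0)$ to merge the surplus $|\boldsymbol{z}|>1$ component with the $\psi$-component containing $\boldsymbol{1}$; this is the only place $\boldsymbol{x}=\boldsymbol{1}$ enters. With this $T_\psi$ the identity $T_\wedge(T_{\psi_1},T_{\psi_2})\equiv T_{\psi_1\wedge\psi_2}$ is a two-line check, and your Step~2 goes through unchanged.
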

\begin{enumerate}
\item \noun{st-BF-Conn(}$B$\emph{)} and \noun{BF-Conn(}$B$\emph{)} are
$\mathrm{PSPACE}$-complete,

\begin{enumerate}
\item \noun{st-Circ-Conn(}$B$\emph{)} and \noun{Circ-Conn(}$B$\emph{)}
are $\mathrm{PSPACE}$-complete,
\item for $n\geq k+4$, there is an $n$-ary function $f\in[B]$ with diameter
of at least $2^{\left\lfloor \frac{n-k-2}{2}\right\rfloor }$.
\end{enumerate}
\end{enumerate}
\begin{proof}
1. \emph{Step 1. }Since $\mathsf{S}_{02}^{k}=\mathsf{S}_{0}^{k}\cap\mathsf{R}_{0}\cap\mathsf{R}_{1}$,
$T_{\psi}$ must be 0-separating of degree $k$, 0-reproducing, and
1-reproducing. As in the previous proof, we construct an intermediate
formula $T_{\psi}^{\sim}$. For $\psi(\boldsymbol{x})$, let 
\[
T_{\psi}^{\sim}=\left(\psi\wedge y\wedge(\boldsymbol{z}=\boldsymbol{0})\right)\vee(|\boldsymbol{z}|>1),
\]
where $y$ and $\boldsymbol{z}=(z_{1},\ldots,z_{k+1})$ are new variables.

$T_{\psi}^{\sim}(\boldsymbol{x},y,\boldsymbol{z})$ is 0-separating
of degree $k$, since all vectors that are no solutions of $T_{\psi}^{\sim}$
have $|\boldsymbol{z}|\leq1$, i.e. $\boldsymbol{z}\in\left\{ 0\cdots0,10\cdots0,010\cdots0,\ldots,0\cdots01\right\} \subset\{0,1\}^{k+1}$,
and thus any $k$ of them have at least one common variable assigned
0. Also, $T_{\psi}^{\sim}$ is 0-reproducing and still 1-reproducing.

Further, for any two solutions $\boldsymbol{s}$ and $\boldsymbol{t}$
of $\psi(\boldsymbol{x})$, $\boldsymbol{s}'=\boldsymbol{s}\cdot1\cdot0\cdots0$
and $\boldsymbol{t}'=\boldsymbol{t}\cdot1\cdot0\cdots0$ are solutions
of $T_{\psi}^{\sim}(\boldsymbol{x},y,\boldsymbol{z})$ and are connected
in $G(T_{\psi}^{\sim})$ iff $\boldsymbol{s}$ and $\boldsymbol{t}$
are connected in $G(\psi)$.

But again, we have produced an additional connected component (consisting
of all solutions with $|\boldsymbol{z}|>1$). To connect it to a component
corresponding to one of $\psi$, we add $1\cdots1\cdot1\cdot10\cdots0$
as a solution,
\begin{eqnarray*}
T_{\psi} & = & \left(\psi\wedge y\wedge(\boldsymbol{z}=\boldsymbol{0})\right)\vee(|\boldsymbol{z}|>1)\vee\left((\boldsymbol{x}=\boldsymbol{1})\wedge y\wedge(\boldsymbol{z}=10\cdots0)\right).
\end{eqnarray*}
Now $G(T_{\psi})$ is connected iff $G(\psi)$ is connected.

\emph{Step 2.} Again we show that the algorithm \noun{Tr} works in
this case. Here,
\begin{eqnarray*}
T_{\wedge}(T_{\psi_{1}},T_{\psi_{2}}) & = & \left(T_{\psi_{1}}(\boldsymbol{x}_{\psi_{1}})\wedge T_{\psi_{2}}(\boldsymbol{x}_{\psi_{2}})\wedge y\wedge(\boldsymbol{z}=\boldsymbol{0})\right)\vee(|\boldsymbol{z}|>1)\\
 &  & \vee\left(T_{\psi_{1}}(\boldsymbol{x}_{\psi_{1}})\wedge T_{\psi_{2}}(\boldsymbol{x}_{\psi_{2}})\wedge y\wedge(\boldsymbol{z}=10\cdots0)\right).
\end{eqnarray*}
Since $T_{\psi}(\boldsymbol{x}_{\psi})\wedge y\wedge(\boldsymbol{z}=\boldsymbol{0})\equiv\psi(\boldsymbol{x}_{\psi})\wedge y\wedge(\boldsymbol{z}=\boldsymbol{0})$
and $T_{\psi}(\boldsymbol{x}_{\psi})\wedge y\wedge(\boldsymbol{z}=10\cdots0)\equiv(\boldsymbol{x}_{\psi}=1)\wedge y\wedge(\boldsymbol{z}=10\cdots0)$
for any $\psi$, this is equivalent to
\begin{eqnarray*}
T_{\psi_{1}\wedge\psi_{2}} & = & \left(\psi_{1}(\boldsymbol{x}_{\psi_{1}})\wedge\psi_{2}(\boldsymbol{x}_{\psi_{2}})\wedge y\wedge(\boldsymbol{z}=\boldsymbol{0})\right)\vee(|\boldsymbol{z}|>1)\\
 &  & \vee\left(\boldsymbol{x}_{\psi_{1}}\wedge\boldsymbol{x}_{\psi_{2}}\wedge y\wedge(\boldsymbol{z}=10\cdots0)\right).
\end{eqnarray*}

2. This follows from 1.\ by Proposition \ref{pro:trf}.

3. By \prettyref{lem:gop} there is an 1-reproducing $(n-k-2)$-ary
function $f$ with diameter of at least $2^{\left\lfloor \frac{n-k-2}{2}\right\rfloor }$.
Let $f$ be represented by a formula $\phi$; then, $T_{\phi}$ represents
an $n$-ary function of the same diameter in $\mathsf{S}_{02}^{k}$.
\end{proof}
This completes the proof of \prettyref{thm:func}.

\section{\label{sub:Quantified-case} The Connectivity of Quantified Formulas}
\begin{defn}
A \emph{quantified $B$-formula} $\phi$ (in prenex normal form) is
an expression of the form 
\[
Q_{1}y_{1}\cdots Q_{m}y_{m}\varphi(y_{1},\ldots,y_{m},x_{1},\ldots,x_{n}),
\]
where $\varphi$ is a $B$-formula, and $Q_{1},\ldots,Q_{m}\in\{\exists,\forall\}$
are quantifiers. The solution graph $G(\phi)$ only involves the free
variables$x_{1},\ldots,x_{n}$.
\end{defn}
For quantified $B$-formulas, we define the connectivity problems
\begin{itemize}
\item \noun{QBF-Conn($B$)}: Given a quantified $B$-formula $\phi$, is
$G(\phi)$ connected?
\item \noun{st-QBF-Conn($B$): }Given a quantified $B$-formula $\phi$
and two solutions $\boldsymbol{s}$ and $\boldsymbol{t}$, is there
a path from $\boldsymbol{s}$ to $\boldsymbol{t}$ in $G(\phi)$?\end{itemize}
\begin{thm}
\label{thm:quan}Let $B$ be a finite set of Boolean functions.\end{thm}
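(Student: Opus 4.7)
The plan is to establish the analogue of \prettyref{thm:func}: if $B\subseteq\mathsf{M}$ or $B\subseteq\mathsf{L}$ then \noun{st-QBF-Conn($B$)} and \noun{QBF-Conn($B$)} are in $\mathrm{P}$ and the diameter is linear in the number of free variables, and otherwise both problems are $\mathrm{PSPACE}$-complete with exponential-diameter witnesses. Relative to \prettyref{thm:func} the easy side loses the class $\mathsf{S}_0$, which matches the light-gray clones $\mathsf{S}_0,\mathsf{S}_{02}$ in Figure \ref{fig:Post's-lattice}. The $\mathrm{PSPACE}$ upper bound for the general case follows exactly as in \prettyref{lem:-gopa}, since verifying that a given free-variable assignment is a solution is a QBF evaluation and hence in $\mathrm{PSPACE}$. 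The key fact for the easy side is that both $\mathsf{M}$ and $\mathsf{L}$ admit polynomial-time quantifier elimination: when $\varphi$ is monotone, $\exists y.\varphi\equiv\varphi[y{:=}1]$ and $\forall y.\varphi\equiv\varphi[y{:=}0]$, so quantifiers are stripped innermost-first, producing a quantifier-free $\mathsf{M}$-formula to which \prettyref{lem:M} applies; when $B\subseteq\mathsf{L}$, one computes the canonical affine form $\bigoplus_i x_i\oplus\bigoplus_j y_j\oplus c$ of the inner $B$-formula bottom-up in polynomial time, and each innermost quantifier is eliminated either by dropping it (if the variable does not occur in the form) or by trivialising the formula to a constant $1$ for $\exists$ or $0$ for $\forall$ (if it does), so that \prettyref{lem:L} applies to the resulting quantifier-free affine formula.

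\textbf{Hard side.} An inspection of Post's lattice shows that $B\not\subseteq\mathsf{M}$ and $B\not\subseteq\mathsf{L}$ leaves two possibilities: either one of the three unquantified-hard conditions $[B]\supseteq\mathsf{S}_{12}$, $[B]\supseteq\mathsf{D}_1$, or $[B]\supseteq\mathsf{S}_{02}^k$ for some $k\ge 2$ already holds (in which case $\mathrm{PSPACE}$-completeness and the exponential-diameter witnesses are inherited verbatim from the lemmas of \prettyref{sub:The-hard-cases}, since every $B$-formula is a quantified $B$-formula with empty prefix), or else $[B]\in\{\mathsf{S}_0,\mathsf{S}_{02}\}$. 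The latter is the genuinely new case. The plan is to adapt the two-step template of \prettyref{lem:s12}: in Step~1, given a 1-reproducing 3-CNF $\psi$, build a quantified $B$-formula $T^{\mathrm{Q}}_\psi=\exists\boldsymbol{w}.\widehat{T}_\psi(\boldsymbol{x},\boldsymbol{y},\boldsymbol{w})$ whose free-variable solution graph is connectivity-equivalent to that of $\psi$; the existentially-bound $\boldsymbol{w}$ supplies the common $0$-coordinate each $\mathsf{S}_0$-subformula is forced to carry, but is then projected away so that $T^{\mathrm{Q}}_\psi$ itself need not be $0$-separating on the free variables. In Step~2, reuse the logarithmic-depth algorithm \noun{Tr} to combine per-clause and per-conjunction gadgets without size blow-up, by verifying the compositional identity $T^{\mathrm{Q}}_\wedge(T^{\mathrm{Q}}_{\psi_1},T^{\mathrm{Q}}_{\psi_2})\equiv T^{\mathrm{Q}}_{\psi_1\wedge\psi_2}$. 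The exponential-diameter witness then follows by applying the Step~1 gadget to the hard $n$-ary function from \prettyref{lem:gop}.

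\textbf{Main obstacle.} The heart of the proof will be the Step~1 gadget design for $\mathsf{S}_0$ and $\mathsf{S}_{02}$: every subformula must remain $0$-separating, so the role played in \prettyref{lem:s12} by the $1$-axis variable (and by the three self-duality axis variables in the $\mathsf{D}_1$-lemma) must now be taken over purely by the quantified witnesses $\boldsymbol{w}$, in such a way that the projection recovers exactly the desired connectivity structure on the free variables, neither merging components that should be distinct nor splitting one that should be connected. Once such a gadget is in place, the compositional verification of Step~2 and the diameter construction should follow the pattern of \prettyref{sub:The-hard-cases} closely, yielding the full dichotomy.
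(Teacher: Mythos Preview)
Your easy side and your case decomposition for the hard side are essentially the same as the paper's (your quantifier elimination for $\mathsf{M}$ via $\varphi[y{:=}1]$ and $\varphi[y{:=}0]$ is a harmless variant of the paper's observation that the function stays monotone), and you are right that the only genuinely new case is $[B]\supseteq\mathsf{S}_{02}$, which also covers $[B]=\mathsf{S}_0$.

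The gap is in your Step~1 gadget for $\mathsf{S}_{02}$: an \emph{existential} prefix cannot work here. If the inner formula $\widehat{T}_\psi(\boldsymbol{x},\boldsymbol{y},\boldsymbol{w})$ is $0$-separating, then by definition there is a single coordinate $i$ such that setting that coordinate to $1$ forces the whole function to $1$. If that coordinate lies among the bound variables $\boldsymbol{w}$, then $\exists\boldsymbol{w}.\widehat{T}_\psi\equiv 1$ and the solution graph is trivially connected. If it lies among the free variables, then the half-cube where that free variable equals $1$ is entirely contained in the solution set of the projection, and every solution with that variable equal to $0$ is adjacent to a solution with it equal to $1$; hence the projected solution graph is again always connected. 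Either way the reduction collapses, so ``projecting away'' the $0$-separating coordinate existentially destroys exactly the connectivity information you are trying to preserve.

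The paper resolves this by using a \emph{universal} quantifier instead. The gadget is simply
\[
T_\psi(\boldsymbol{x},y,z)=(\psi(\boldsymbol{x})\wedge y)\vee z,
\]
which is in $\mathsf{S}_{02}$ (the $0$-separating coordinate is $z$, and it is $0$- and $1$-reproducing), and then one takes $\phi'=\forall z\,\varphi'$. Since $\forall z\bigl((\psi\wedge y)\vee z\bigr)\equiv\psi\wedge y$, the universal quantifier strips off the $\vee z$ and one lands exactly on the $\mathsf{S}_{12}$ transformation of \prettyref{lem:s12}, so connectivity equivalence and the exponential-diameter witness follow immediately. The compositional identity $T_\wedge(T_{\psi_1},T_{\psi_2})\equiv T_{\psi_1\wedge\psi_2}$ for this $T$ is a two-line calculation, so your Step~2 via \noun{Tr} goes through unchanged once you switch the quantifier. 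In short: keep your overall plan, but replace $\exists\boldsymbol{w}$ by $\forall z$ and use the one-line gadget $(\psi\wedge y)\vee z$.
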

\begin{enumerate}
\item If $B\subseteq\mathsf{M}$ or $B\subseteq\mathsf{L}$, then

\begin{enumerate}
\item \noun{st-QBF-Conn(}$B$\emph{)} and \noun{QBF-Conn(}$B$\emph{)} are
in \noun{P},

\begin{enumerate}
\item the diameter of every quantified $B$-formula is linear in the number
of free variables.
\end{enumerate}
\item Otherwise,

\begin{enumerate}
\item \noun{st-QBF-Conn(}$B$\emph{)} and \noun{QBF-Conn(}$B$\emph{)} are
$\mathrm{PSPACE}$-complete,
\item there are quantified $B$-formulas with at most one quantifier such
that their diameter is exponential in the number of free variables.
\end{enumerate}
\end{enumerate}
\end{enumerate}
\begin{proof}
1. For $B\subseteq\mathsf{M}$, any quantified $B$-formula $\phi$
represents a monotone function: Using $\exists y\psi(y,\boldsymbol{x})=\psi(0,\boldsymbol{x})\vee\psi(1,\boldsymbol{x})$
and $\forall y\psi(y,\boldsymbol{x})=\psi(0,\boldsymbol{x})\wedge\psi(1,\boldsymbol{x})$
recursively, we can transform $\phi$ into an equivalent $\mathsf{M}$-formula
since $\wedge$ and $\vee$ are monotone. Thus as in \prettyref{lem:M},
\noun{st-QBF-Conn(}$B$) and \noun{QBF-Conn(}$B$) are trivial, and
$d_{f}(\boldsymbol{a},\boldsymbol{b})=|\boldsymbol{a}-\boldsymbol{b}|$
for any two solutions $\boldsymbol{a}$ and $\boldsymbol{b}$.

For a quantified $B$-formula $\phi=Q_{1}y_{1}\cdots Q_{m}y_{m}\varphi$
with $B\subseteq\mathsf{L}$, we first remove the quantifications
over all fictive variables of $\varphi$ (and eliminate the fictive
variables if necessary). If quantifiers remain, $\phi$ is either
tautological (if the rightmost quantifier is $\exists$) or unsatisfiable
(if the rightmost quantifier is $\forall$), so the problems are trivial,
and $d_{f}(\boldsymbol{a},\boldsymbol{b})=|\boldsymbol{a}-\boldsymbol{b}|$
for any two solutions $\boldsymbol{a}$ and $\boldsymbol{b}$. Otherwise,
we have a quantifier-free formula and the statements follow from \prettyref{lem:L}.

2. Again as in \prettyref{lem:-gopa}, it follows that \noun{st-QBF-Conn(}$B$)
and \noun{QBF-Conn(}$B$) are in $\mathrm{PSPACE}$, since the evaluation
problem for quantified $B$-formulas is in $\mathrm{PSPACE}$ \citep{Schaefer:1978:CSP:800133.804350}.

An inspection of Post's lattice shows that if $B\nsubseteq\mathsf{M}$
and $B\nsubseteq\mathsf{L}$, then $[B]\supseteq\mathsf{S}_{12}$,
$[B]\supseteq\mathsf{D}_{1}$, or $[B]\supseteq\mathsf{S}_{02}$,
so we have to prove $\mathrm{PSPACE}$-completeness and show the existence
of $B$-formulas with an exponential diameter in these cases.

For $[B]\supseteq\mathsf{S}_{12}$ and $[B]\supseteq\mathsf{D}_{1}$,
the statements for the $\mathrm{PSPACE}$-hardness and the diameter
obviously carry over from \prettyref{thm:func}.

For $B\supseteq\mathsf{S}_{02}$, we give a reduction from the problems
for (unquantified) 3-CNF-formulas; we proceeded again similar as in
the proof of \prettyref{lem:s12}. We give a transformation $T_{\psi}$
s.t. $T_{\psi}\in\mathsf{S}_{02}$ for all formulas $\psi$. Since
$\mathsf{S}_{02}=\mathsf{S}_{0}\cap\mathsf{R}_{0}\cap\mathsf{R}_{1}$,
$T_{\psi}$ must be self-dual, 0-reproducing, and 1-reproducing. For
$\psi(\boldsymbol{x})$ let 
\[
T_{\psi}=(\psi\wedge y)\vee z,
\]
with the two new variables $y$ and $z$.

$T_{\psi}$ is 0-separating since all vectors that are no solutions
have $z=0$. Also, $T_{\psi}$ is 0-reproducing and 1-reproducing.
Again, we use the algorithm \noun{Tr} from the proof of \prettyref{lem:s12}
to transform any 3-CNF-formula $\phi$ into a $B$-formula $\varphi'$
equivalent to $T_{\phi}$. Again, we show

\begin{eqnarray*}
T_{\wedge}(T_{\psi_{1}},T_{\psi_{2}}) & = & \left(\left((\psi_{1}\wedge y)\vee z\right)\wedge\left((\psi_{2}\wedge y)\vee z\right)\wedge y\right)\vee z\\
 & \equiv & \left(\left(\psi_{1}\wedge y\right)\wedge\left(\psi_{2}\wedge y\right)\wedge y\right)\vee z\\
 & \equiv & \left(\psi_{1}\wedge\psi_{2}\wedge y\right)\vee z=T_{\psi_{1}\wedge\psi_{2}}.
\end{eqnarray*}

Now let 
\[
\phi'=\forall z\varphi'.
\]
Then, for any two solutions $\boldsymbol{s}$ and $\boldsymbol{t}$
of $\phi(\boldsymbol{x})$, $\boldsymbol{s}'=\boldsymbol{s}\cdot1$
and $\boldsymbol{t}'=\boldsymbol{t}\cdot1$ are solutions of $\phi'(\boldsymbol{x},y)$,
and they are connected in $G(\phi')$ iff $\boldsymbol{s}$ and $\boldsymbol{t}$
are connected in $G(\phi)$, and $G(\phi')$ is connected iff $G(\phi)$
is connected.

The proof of \prettyref{lem:gop} shows that there is an $(n-1)$-ary
function $f$ with diameter of at least $2^{\left\lfloor \frac{n-1}{2}\right\rfloor }$.
Let $f$ be represented by a formula $\phi$; then $\phi'$ as defined
above is a quantified $B$-formula with $n$ free variables and one
quantifier with the same diameter.\end{proof}
\begin{rem}
An analog to \prettyref{thm:quan} also holds for quantified circuits
as defined in \citep[Section 7]{reith2000}.
\end{rem}

\section{Future Directions}

While for $st$-connectivity and connectivity of $B$-formulas and
$B$-circuits we now have a quite complete picture, there is a multitude
of interesting variations in different directions with open problems.

As mentioned in the abstract, for CNF($\mathcal{S}$)-formulas with
constants, we have a complete classification for both connectivity
problems and the diameter also \citep{csp}. However, for CNF($\mathcal{S}$)-formulas
without constants, the complexity of the connectivity problem is still
open in some cases \citep{diss}.

Besides CNF($\mathcal{S}$)-formulas, $B$-formulas and $B$-circuits,
there are further variants of Boolean satisfiability, and investigating
connectivity in these settings might be worthwhile as well. For example,
disjunctive normal forms with special connectivity properties were
studied by Ekin et al. already in 1997 for their ``important role
in problems appearing in various areas including in particular discrete
optimization, machine learning, automated reasoning, etc.'' \citep{ekin1999connected}.

Other connectivity-related problems already mentioned by Gopalan et
al.\ are counting the number of components and approximating the
diameter. Recently, Mouawad et al. investigated the question of finding
the shortest path between two solutions \citep{mouawad2014shortest},
which is of special interest to reconfiguration problems.

Furthermore, our definition of connectivity is not the only sensible
one: One could regard two solutions connected whenever their Hamming
distance is at most $d$, for any fixed $d\geq1$; this was already
considered related to random satisfiability, see \citep{achlioptas2006solution}.
This generalization seems meaningful as well as challenging.

Finally, a most interesting subject are CSPs over larger domains;
in 1993, Feder and Vardi conjectured a dichotomy for the satisfiability
problem over arbitrary finite domains \citep{feder1998computational},
and while the conjecture was proved for domains of size three in 2002
by Bulatov \citep{bulatov2002dichotomy}, it remains open to date
for the general case. Close investigation of the solution space might
lead to valuable insights here.

For $k$-colorability, which is a special case of the general CSP
over a $k$-element set, the connectivity problems and the diameter
were already studied by Bonsma and Cereceda \citep{bonsma2009finding},
and Cereceda, van den Heuvel, and Johnson \citep{cereceda2011finding}.
They showed that for $k=3$ the diameter is at most quadratic in the
number of vertices and the $st$-connectivity problem is in P, while
for $k\geq4$, the diameter can be exponential and $st$-connectivity
is PSPACE-complete in general.\bibliographystyle{amsalpha}
\bibliography{jib}

\end{document}